\newif\ifbuildfigure \buildfiguretrue
\definecolor{magenta4}{rgb}{0.5625,0,0.5625}
\definecolor{green4}{rgb}{0,0.5625,0}
\definecolor{orange4}{rgb}{0.98,0.31,0.09}
\newcommand{\aqf}{{telescoping adaptive filter}\xspace}
\newcommand{\AQF}{{TAF}\xspace}
\newcommand{\TAF}{{TAF}\xspace}
\newcommand{\eAF}{{exAF}\xspace}
\newcommand{\eaf}{{extension adaptive filter}\xspace}
\newcommand{\defn}[1]{\textbf{#1}}
\newcommand{\calS}{{\mathcal S}}
\newcommand{\calU}{{\mathcal U}}
\newcommand{\fpprob}{false-positive probability\xspace}
\newcommand{\secref}[1]         {Section~\ref{sec:#1}}
\newcommand{\seclabel}[1]    {\label{sec:#1}}
\newcommand{\lemlabel}[1]   {\label{lem:#1}}
\newcommand{\lemref}[1]         {Lemma~\ref{lem:#1}}
\newcommand{\figlabel}[1]   {\label{fig:#1}}
\newcommand{\figref}[1]         {Figure~\ref{fig:#1}}
\newcommand{\thmlabel}[1]   {\label{thm:#1}}
\newcommand{\thmref}[1]         {Theorem~\ref{thm:#1}}
\newcommand{\corlabel}[1]   {\label{cor:#1}}
\newcommand{\corref}[1]         {Corollary~\ref{cor:#1}}
\newcommand{\local}{{\mathbf L}} 
\newcommand{\remote}{{\mathbf R}}
\let\oldtexttt\texttt
\renewcommand{\texttt}[1]{\xspace{\normalfont{\oldtexttt{#1}}}\xspace}
\renewcommand{\epsilon}{\varepsilon}
\newtheorem*{theorem*}            {Theorem}
\newtheorem*{lemma*}     {Lemma}
\newtheorem*{corollary*} {Corollary}
\newcommand{\pparagraph}[1]{\vspace{0.07in}\noindent{\textbf{#1}}}
\author{David J. Lee}{Cornell University, Ithaca, NY 14853 USA}{djl328@cornell.edu}{}{This author's research is supported in part by NSF CCF 1947789.}
\author{Samuel McCauley}{Williams College, Williamstown MA 01267 USA \and \url{http://dept.cs.williams.edu/~sam/}}{sam@cs.williams.edu}{}{This author's research is supported in part by NSF CCF 2103813}
\author{Shikha Singh}{Williams College, Williamstown MA 01267 USA \and \url{http://cs.williams.edu/~shikha/}}{shikha@cs.williams.edu}{}{This author's
  research is supported in part by NSF CCF 1947789.}
\author{Max Stein}{Williams College, Williamstown MA 01267 USA}{Max.Stein@williams.edu}{}{This author's
  research is supported in part by NSF CCF 1947789.}
\authorrunning{D. J. Lee, S. McCauley, S. Singh, and M. Stein}
\keywords{Filters, approximate-membership query data structures (AMQs), Bloom filters, quotient filters,
  cuckoo filters, adaptivity, succinct data structures}
\date{}
\title{Telescoping Filter: A Practical Adaptive Filter }
\begin{document}
\maketitle
\begin{abstract}
Filters are small, fast, and approximate set membership data structures.  
They are often used to filter out expensive accesses to a remote set $\calS$
for negative queries (that is, filtering out queries $x \notin \calS$). Filters have one-sided errors: on a negative query, a filter may say ``present''
with a tunable \fpprob of $\epsilon$. Correctness is traded for space: filters only use $\log (1/\epsilon) + O(1)$ bits per element.

The false-positive guarantees of most filters, however, hold only for a single query.
In particular,
if $x$ is a false positive,
a subsequent query to $x$ is a false positive 
with probability $1$, not $\epsilon$.  
With this in mind, recent work has introduced the notion of an \defn{adaptive filter}.  A filter is adaptive if each query is a false positive with probability $\epsilon$, regardless of answers to previous queries.  This requires ``fixing'' false positives as they occur.

Adaptive filters not only provide strong false positive guarantees in adversarial environments but also improve query performance on practical workloads by eliminating repeated false positives.

Existing work on adaptive filters falls into two categories.  On the one hand, there are practical filters, based on the cuckoo filter, that attempt to fix false positives heuristically without meeting the adaptivity guarantee.  On the other hand, the broom filter is a very complex adaptive filter that meets the optimal theoretical bounds.

In this paper, we bridge this gap by designing the \defn{\aqf} (\AQF), a practical, provably adaptive filter.
We provide theoretical false-positive and space guarantees for our filter, along
with empirical results where we compare its performance against state-of-the-art filters.
We also implement the broom filter and compare it to the \AQF\@.
Our experiments show that theoretical adaptivity can lead to improved false-positive performance on practical inputs,
and can be achieved while maintaining throughput that is similar to non-adaptive filters.
\end{abstract}

\newpage
\section{Introduction}\label{sec:intro}
A filter is a compact and probabilistic representation of a set $\calS$
 from a universe $\calU$. A filter  supports insert and query operations
on $\calS$.  On a query for an element $x \in \calS$, a filter returns ``present'' with probability $1$, i.e., a filter guarantees no false negatives.
A filter is allowed to have bounded false positives---on a query for an element
$x \notin \calS$,  it may incorrectly return ``present'' with a small and tunable
false-positive probability $\epsilon$.

Filters are used because they allow us to trade correctness for space. 
A lossless representation of $\calS \subseteq \calU$ requires $\Omega(n \log u)$ bits,
where $n = |\calS|$,  $u = |\calU|$, and $n \ll u$.  Meanwhile, an optimal filter with false-positive probability $\epsilon$
requires only $\Theta(n \log (1/\epsilon))$ bits~\cite{CarterFlGi78}.  

Examples of classic filters are the \emph{Bloom filter}~\cite{Bloom70}, the \emph{cuckoo filter}~\cite{FanAnKa14}, and the \emph{quotient filter}~\cite{BenderFaJo12}.
Recently, filters have exploded in popularity due to their widespread applicability---many practical
variants of these classic filters have been designed to improve upon throughput, space efficiency, 
or cache efficiency~\cite{PandeyBeJo17,
wang2019vacuum,
breslow2018morton,
graf2020xor,
lang2019performance,
dillinger2021ribbon}.

A filter's small size allows it to fit in fast memory, higher in the memory hierarchy than a lossless representation of $\calS$ would allow.  For this reason, filters are frequently used to speed up expensive queries to an external dictionary storing $\calS$.  

In particular, when a dictionary for $\calS$ is stored remotely (on a disk or across a network),
checking a small and fast filter first can avoid expensive remote accesses 
for a $1-\epsilon$ fraction of negative queries. 
This is the most common use case of the filter, with applications
in LSM-based key-value stores~\cite{
o1996log,
ChangDeGh08,
matsunobu2020myrocks
}, databases~\cite{DengRa06, 
EppsteinGoMi2017, 
CohenMa03 
}, and distributed systems and networks~\cite{
TarkomaRoLa12,
BroderMi04}.

\pparagraph{False positive guarantees and adaptivity.}  
When a filter is used to speed up queries to a remote set $\calS$, its performance depends on its false-positive guarantees: how often does the filter make a mistake, causing us to access $\calS$ unnecessarily?

Many existing filters, such as the Bloom, quotient and cuckoo filters, 
provide poor false-positive guarantees because they hold only for a single query.  Because these filters do not {\em adapt}, that is, they do not ``fix'' any false positives,
querying a known false positive $x$ repeatedly can drive their false-positive rate to $1$,
rendering the filter useless. 

Ideally, we would like a stronger guarantee: even if a query $x$ has been a false positive in the past, a subsequent query to $x$ is a false positive with probability at most $\epsilon$.  
This means that the filter must ``fix'' each false positive $x$ as it occurs, so that subsequent queries to $x$ are unlikely to be false positives.  This notion of adaptivity was formalized by Bender et al.~\cite{bender2018bloom}. A filter is \defn{adaptive} if it guarantees a false positive probability of $\epsilon$ for every query, {\em regardless of answers to previous
queries}.
Thus, adaptivity provides security advantages against an adversary attempting to degrade performance,
e.g., in denial-of-service attacks.

At the same time, fixing previous false positives leads to improved performance.  Many practical datasets do, in fact, repeatedly query the same element---on such a dataset, fixing previous false positives means that a filter only incurs one false positive per \emph{unique} query.  Past work has shown that simple, easy-to-implement changes to known filters can fix false positives heuristically. Due to repeated queries, these heuristic fixes can lead to reduction of several orders of magnitude in the number of incurred false positives~\cite{MitzenmacherPo17, kopelowitz21, BruckGaJi06}.

Recent efforts that tailor filters to query workloads by applying
machine learning techniques to optimize performance~\cite{rae2019meta, mitzenmacher2018model,deeds2020stacked}
reinforce the benefits achieved by adaptivity.

\pparagraph{Adaptivity vs practicality.}
 The existing work on adaptivity represents a dichotomy between simple filters
one would want to implement and use in practice but are not actually adaptive~\cite{MitzenmacherPo17, kopelowitz21}, 
or adaptive filters that are purely theoretical and pose a challenge to implementation~\cite{bender2018bloom}.

Mitzenmacher et al.~\cite{MitzenmacherPo17} provided several variants of the {\em adaptive cuckoo filter} (ACF)
and showed that they incurred significantly fewer
false positives (compared to a standard cuckoo filter) on real network trace data.  
The data structures in~\cite{MitzenmacherPo17}
use simple heuristics to fix false positives with immense practical gains, 
leaving open the question of whether such heuristics can achieve worst-case
guarantees on adaptivity. 

Recently, Kopelowitz et al.~\cite{kopelowitz21} proved that this is not true
even for a non-adversarial notion of adaptivity.  In particular, they defined {\bf support optimality} as the
adaptivity guarantee on ``predetermined'' query workloads: that is, query workloads that are fixed ahead
of time and not constructed in response to a filter's response on previous queries.  They showed that
the filters in~\cite{MitzenmacherPo17} fail to be adaptive even under this weaker notion---repeating $O(1)$ queries $n$ times may cause them to incur $\Omega(n)$ false positives.  

Kopelowitz et al.~\cite{kopelowitz21} proposed a simple alternative, the \defn{Cuckooing ACF}, that achieves support optimality by cuckooing on false positives (essentially reinserting the element).  
Furthermore, they proved that none of the cuckoo filter variants (including the Cuckooing ACF) are adaptive.  They showed that a prerequisite to achieving adaptivity is allocating a variable number
of bits to each stored element---that is, maintaining variable-length fingerprints.  All of the cuckoo filter variants use a bounded number of bits per element.

The only known adaptive filter is the \defn{broom filter} of Bender et al.~\cite{bender2018bloom}, so-named because it ``cleans up'' its mistakes. The {broom filter} achieves adaptivity, while supporting constant-time worst-case query and insert costs, 
using very little extra space---$O(n)$ extra bits in total.  
Thus the broom filter implies that, {\em in theory}, adaptivity
is essentially free.   

More recently, Bender et al.~\cite{bender21} compared the broom filter~\cite{bender2018bloom} to a static filter augmented with a comparably sized top-$k$ cache (a cache that stores the $k$ most frequent requests). They found that the broom filter outperforms the cache-augmented filter on Zipfian distributions due to ``serendipitous corrections''---fixing a false positive eliminates future false positives in addition to the false positive that triggered the adapt operation.  
They noted that their broom filter simulation is ``quite slow,'' and left open the problem of designing a practical broom filter with
performance comparable to that of a quotient filter.

In this paper, we present a practical and efficient filter which also achieves worst-case adaptivity: the \defn{\aqf}.  The key contribution of this data structure is a practical 
method to achieve worst-case adaptivity using variable-length fingerprints.  

\pparagraph{Telescoping adaptive filter.} 
The \aqf (\AQF) combines ideas from the heuristics used in the adaptive cuckoo filter~\cite{MitzenmacherPo17}, and the theoretical adaptivity 
of the broom filter~\cite{bender2018bloom}.  

The \AQF is built on a rank-and-select
quotient filter (RSQF)~\cite{PandeyBeJo17} (a space- and cache-efficient quotient filter~\cite{BenderFaJo12} variant), and inherits its performance guarantees.

The \aqf is the first adaptive filter that can take advantage of \emph{any} amount of extra space for adaptivity, even a fractional number of bits per element.
We prove that if the \AQF uses $\left( \frac 1e + \frac{b}{(1-b)^2} \right)$ extra bits per element in expectation, then it is is 
provably adaptive for any workload consisting of up to $n/(b \sqrt{\epsilon})$ unique queries~(\secref{analysis}). 
Empirically, we show that the \AQF outperforms this bound: 
with only $0.875$ of a bit extra per element for adaptivity, it is adaptive for larger query workloads. 
Since the RSQF uses $2.125$ metadata
bits per element, the total number of bits used by the \AQF is $ (n/\alpha) (\log_2 (1/\epsilon) + 3)$,
where $\alpha$ is the load factor.

The \AQF stores these extra bits space- and cache-efficiently 
using a practical implementation of a theoretically-optimal compression scheme: \defn{arithmetic coding}~\cite{cacm-arcd, brown-arcd}.  Arithmetic coding is particularly well-suited 
to the exponentially decaying probability distribution of repeated false positives. While standard arithmetic
coding on the unit interval can be slow, we implement an efficient approximate integer variant.

The C code for our implementation can be found at \url{https://github.com/djslzx/telescoping-filter}.

\pparagraph{Our contributions.}
We summarize our main contributions below. 

\begin{itemize}
	\item We present the first provably-adaptive filter, the \aqf, engineered with space, cache-efficiency and throughput in mind, 
demonstrating that adaptivity is not just a theoretical concept, and can be achieved in practice.
        \item  As a benchmark for \AQF, we also provide a practical implementation of the broom filter~\cite{bender2018bloom}.  We call our implementation of the broom filter the \defn{\eaf} (\eAF).   While both \AQF and \eAF use the near-optimal $\Theta(n)$ extra
bits in total to adapt on $\Theta(n)$ queries, 
the \aqf is optimized to achieve better constants and eke out the most \emph{adaptivity per bit}.  
This is confirmed by our experiments which show that given the same space for adaptivity $(0.875$ bits per element), the \AQF outperforms
the false-positive performance of the \eAF significantly on both practical and adversarial workloads.
Meanwhile, our experiments show that the query performance of \eAF is factor $2$ better than that of the \AQF\@. Thus,
we show that there is a trade off between throughput performance and how much adaptivity is gained from each bit.

\item We give the first empirical evaluation of how well an adaptive filter can fix positives in practice.  We compare the \AQF with a broom filter implementation, as well as with previous heuristics.  We show that the \TAF frequently matches or outperforms other filters, while
it is especially effective in fixing false positives on ``difficult'' datasets, where repeated queries are spaced apart by many other false positives.  We also evaluate the throughput of the \AQF and the \eAF against the vacuum filter~\cite{wang2019vacuum} and RSQF, showing for the first time that adaptivity can be achieved while retaining good throughput bounds.
\end{itemize}

\section{Preliminaries}\label{sec:prelim}

In this section, we provide background on filters and adaptivity, and describe our model.

\subsection{Background on Filters}
We briefly summarize the structure of the filters discussed in this paper.  
For a more detailed description, we refer the reader to the full version.
All logs in the paper are base $2$.  We assume that $\epsilon$ is an inverse power of $2$.

The quotient filter and cuckoo filter are both based on the single-hash function filter~\cite{PaghPaRa05}. 
Let the underlying hash function $h$ output $\Theta(\log n)$ bits. To represent a set  $\calS \subseteq \calU$,
the filter stores a fingerprint $f(x)$ for each element $x \in \calS$.  The fingerprint $f(x)$ consists of the first $\log n + \log (1/\epsilon)$
bits of $h(x)$, where $n = |\calS|$ and $\epsilon$ is the \fpprob.   

The first $\log n$ bits of $f(x)$ are called
the \defn{quotient} $q(x)$ and are stored implicitly; the remaining $\log(1/\epsilon)$ bits are called the \defn{remainder} $r(x)$
and are stored explicitly in the data structure.  Both filters consist of an array of slots, where each slot can store one remainder.

\pparagraph{Quotient filter.}
The quotient filter (QF)~\cite{BenderFaJo12} is based on linear probing.  
To insert $x \in \calS$, the remainder $r(x)$ is stored in the slot location determined by the quotient $q(x)$, using linear probing to find the next empty slot.
A small number of metadata bits suffice to recover the original slot for each stored element.
A query for $x$ checks if the remainder $r(x)$ is stored in the filter---if the remainder is found, it returns ``present''; otherwise, it returns ``absent.'' 
The rank-and-select quotient filter (RSQF)~\cite{PandeyBeJo17} implements such a scheme using very few metadata bits (only 2.125 bits) per element.  

\pparagraph{Broom filter.} 
The \defn{broom filter} of Bender et al.~\cite{bender2018bloom} is based on the quotient filter.
Initially, it stores the same fingerprint $f(x)$ as a quotient filter.
The broom filter uses the remaining bits of $h(x)$, called \defn{adaptivity bits}, to extend $f(x)$ in order to adapt on false positives.

On a query $y$, if 
there is an element $x\in\calS$ such that $f(x)$ is a prefix of $h(y)$, the broom filter returns ``present.''
If it turns out that $y \notin \calS$,
the broom filter
adapts by extending the fingerprint $f(x)$ until it is no longer a prefix of $h(y)$.\footnote{These additional bits are stored
	separately in the broom filter: groups of $\Theta(\log n)$ adaptivity bits,
corresponding to $\log n$ consecutive slots in the filter, are stored such that accessing all the adaptivity bits of
a particular element (during a query operation) can be done in $O(1)$ time.} Bender et al. show that, with high probability, $O(n)$ total adaptivity bits of space are sufficient for the broom filter to be adaptive on $\Theta(n)$ queries.

\pparagraph{Cuckoo filters and adaptivity.}
The cuckoo filter resembles the quotient filter but uses a cuckoo hash table rather than linear probing.  Each element has two fingerprints, and therefore two quotients.  The remainder of each $x\in \calS$ must always be stored in the slot corresponding to one of $x$'s two quotients.

The Cyclic ACF~\cite{MitzenmacherPo17}, Swapping ACF~\cite{MitzenmacherPo17}, and Cuckooing ACF~\cite{kopelowitz21}\footnote{We use the nomenclature of~\cite{kopelowitz21} in calling these the Cyclic ACF, Swapping ACF, and Cuckooing ACF.}
change the function used to generate the remainder on a false positive.  
To avoid introducing false negatives, a filter using this technique must somehow track which function was used to generate each remainder so that the appropriate remainders can be compared at query time.

The Cyclic ACF stores $s$ extra bits for each slot, denoting which of $2^s$ different remainders are used.  
The Swapping ACF, on the other hand, groups slots into constant-sized bins, and has a fixed remainder function for each slot in a bin.  A false positive is fixed by moving some $x\in\calS$ to a different slot in the bin, then updating its remainder using the function corresponding to the new slot.  
The Cuckooing ACF works in much the same way, but both the quotient and remainder are changed by ``cuckooing'' the element to its alternate position in the cuckoo table.

\subsection{Model and Adaptivity} 

All filters that adapt on false positives~\cite{MitzenmacherPo17, kopelowitz21, bender2018bloom} have access to the original set $\calS$.  This is called the \defn{remote representation}, denoted $\remote$.  The remote representation does not count towards the space usage of the filter.  On a false positive, the filter is allowed to access the set $\remote$ to help fix the false positive. 

The justification for this model is twofold.  (This justification is also discussed in~\cite{bender2018bloom, kopelowitz21, MitzenmacherPo17}.)  First, the most common use case of filters is to filter out negative queries to $\calS$---in this case, a positive response to a query accesses $\remote$ anyway.  Information to help rebuild the filter can be stored alongside the set in this remote database.  Second, remote access is necessary to achieve good space bounds: Bender et al.~\cite{bender2018bloom} proved that any adaptive filter without remote access to $\calS$ requires $\Omega(n\log\log u)$ bits of space.

Our filter can answer queries using only the local state $\local$.
Our filter accesses the remote state $\remote$ in order to fix false positives when they occur, updating its local state.  
This allows our filter to be adaptive while using small (near optimal) space for the local state.

\pparagraph{Adaptivity.}
The \defn{sustained false positive} rate of a filter is the probability with which a query
is a false positive, \emph{regardless of the filter's answers to previous queries}.

The sustained false positive rate must hold even if generated by an adversary.  
We use the definition of Bender et al.~\cite{bender2018bloom},
which is formally defined by a game between an adversary and the filter, where the
adversary's goal is to maximize the filter's false positive rate.  We summarize this
game next; for a formal description of the model see Bender et al.~\cite{bender2018bloom}.

In the \defn{adaptivity game}, the adversary generates a sequence of queries $Q = x_1,
x_2, \ldots, x_t$.  After each query $x_i$, both the adversary and filter learn whether $x_i$ is a false positive (that is, $x_i \notin S$ but a query on $x_i$ returns ``present'').
The filter is then allowed to adapt before query $x_{i+1}$ is made by the adversary.
The adversary can use the information about whether queries $x_1, \ldots, x_{i}$ were
a false positive or not, to choose the next query $x_{i+1}$.

At any time $t$, the adversary may assert that it has discovered a special
query $\tilde x_t$ that is likely to be a false positive of the filter.  The adversary ``wins''
if $\tilde x_t$ is in fact a false positive of the filter at time $t$, and the filter ``wins''
if the adversary is wrong and $\tilde x_t$ is not a false positive of the filter at time $t$.

The \defn{sustained false positive} rate of a filter is the maximum probability $\epsilon$
with which the adversary can win the above adaptivity game.  A filter is \defn{adaptive} if it can achieve a sustained false positive rate of $\epsilon$, for any constant $0 < \epsilon <1$.

Similar to~\cite{bender2018bloom}, we assume that the adversary cannot find a never-before-queried element that is a false positive of the filter with probability greater than $\epsilon$.  
Many hash functions satisfy this property, e.g., if the adversary is a polynomial-time
algorithm then one-way hash functions are sufficient~\cite{NaorYo15}.
Cryptographic hash functions satisfy this property in practice, and it is likely that even simple hash functions (like Murmurhash used in this paper) suffice for most applications.

\pparagraph{Towards an adaptive implementation.} Kopelowitz et al.~\cite{kopelowitz21} showed that 
the Cyclic ACF (with any constant number of hash-selector bits), the Swapping ACF, and the Cuckooing ACF are not adaptive.  The key insight behind this proof is that for all three filters, the state of an element---which slot it is stored in, and which fingerprint function is used---can only have $O(1)$ values.  Over $o(n)$ queries, an adversary can find queries that collide with an element on all of these states.  These queries can never be fixed.

Meanwhile, the broom filter avoids this issue by allowing certain elements to have more than $O(1)$ adaptivity bits---up to $O(\log n)$, in fact.  The broom filter stays space-efficient by maintaining $O(1)$ adaptivity bits per element \emph{on average}.

Thus, a crucial step for achieving adaptivity is dynamically changing how much space is used for the adaptivity of each element based on past queries.  
The \aqf achieves this dynamic space allocation (hence the name ``telescoping'') using an arithmetic coding.

\section{The Telescoping Adaptive Filter}\seclabel{aqf}

In this section, we describe the high-level ideas behind the \aqf.

\pparagraph{Structure of the \aqf.}
Like the broom filter, the \AQF is based on a quotient filter where the underlying hash function $h$ outputs $\Theta(\log n)$ bits.  
For any $x \in \calS$, the first $\log n$ bits of $h(x)$ are the quotient $q(x)$ (stored implicitly),
and the next $\log (1/\epsilon)$ bits are the initial remainder $r_0(x)$, stored in the slot determined by the quotient.
We maintain each element's original slot using the strategy of the rank-and-select quotient filter~\cite{PandeyBeJo17}, which stores $2.125$ metadata bits per element.

The \AQF{} differs from a broom filter in that, on a false positive, the \AQF{} \emph{changes} its remainder rather than lengthening it, similar to the Cyclic ACF\@.

For each element in the \AQF, we store a \defn{hash-selector value}.  If an element $x$ has hash-selector value $i$, its remainder $r_i$ 
is the consecutive sequence of $\log(1/\epsilon)$ bits starting at the $(\log n + i \log(1/\epsilon))$th bit of $h(x)$.  Initially, the hash-selector values of all elements are $0$,
and thus the remainder $r(x)$ is the first $\log 1/\epsilon$ bits immediately following the quotient.
When the hash-selector value of an element $x \in S$ is incremented, its remainder ``slides over'' to the next (non-overlapping)
$\log 1/\epsilon$ bits of the hash $h(x)$, as shown in~\figref{fingerprint}.  Thus, the fingerprint of $x$ is $f(x) = q(x) \circ r_i(x)$, where $\circ$ denotes concatenation and $i$ is the hash-selector value of $x$.

\begin{figure}[h]
  \centering
  \includegraphics[scale=0.03]{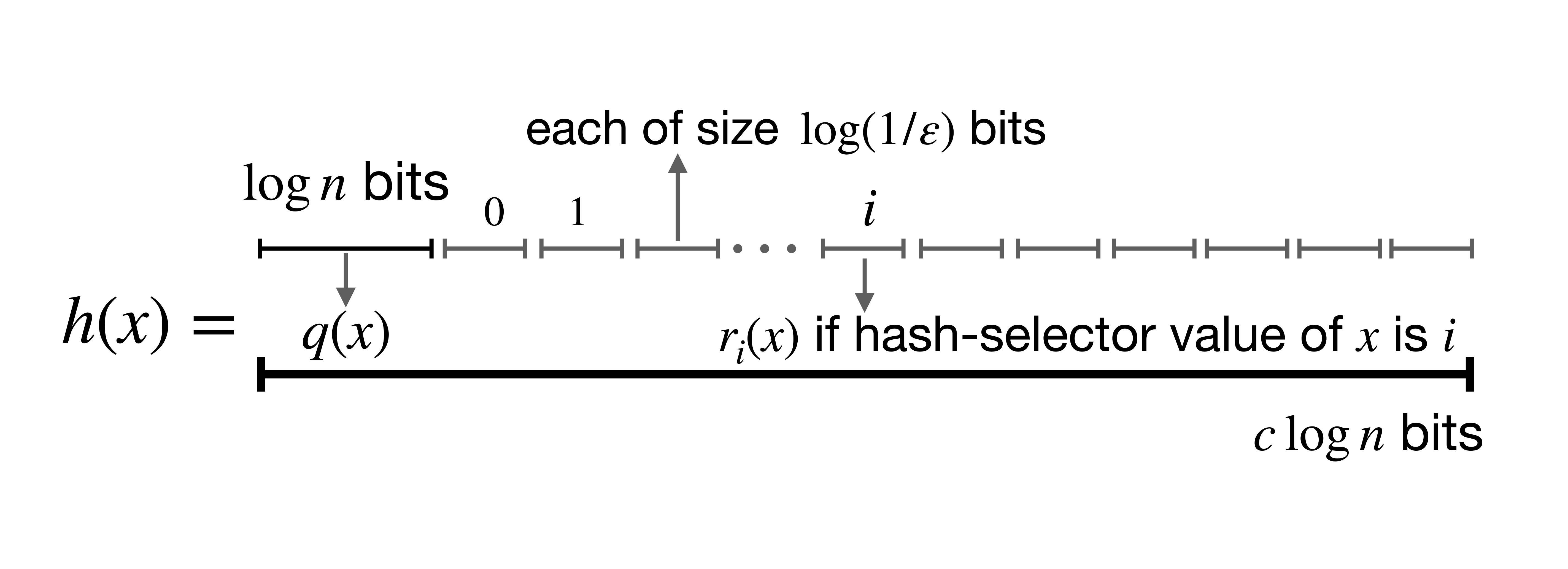}
  \caption{The fingerprint of $x \in S$ is its quotient $q(x)$ followed by its remainder $r_i(x)$, where $i$ is the hash-selector value of $x$.}\figlabel{fingerprint}
\end{figure}

On a false positive query $y \notin \calS$, there must be some $x \in \calS$ with hash-selector value $i$, such that $q(x) = q(y)$ and $r_i(x) = r_i(y)$.
To resolve this false positive, we increment $i$. 
We update the hash-selector value and the stored remainder accordingly.

We describe below how to store hash-selector values using an average $0.875$ bits per element.  This means that the \AQF with load factor $\alpha$ 
uses $(n/\alpha) (\log (1/\epsilon) + 3)$ bits of space.

\pparagraph{Difference between hash-selector and adaptivity bits.}
Using hash-selector bits, rather than adaptivity bits (as in the broom filter), has some immediate upsides and downsides.

If fingerprint prefixes $p(x)$ and $p(y)$ collide, they will still collide with probability $1/2$ after each prefix has been lengthened by one bit.
But adding a bit also reduces the probability that $x$ will collide with any
\emph{future} queries by a factor of $1/2$.  Such false positives that are fixed (without
being queried) are called \defn{serendipitous false positives}~\cite{bender21}.

On the other hand, incrementing the hash-selector value of an element $x \in \calS$ after it collides with an element $y \notin \calS$ reduces
the probability that $y$ will collide again with $x$ by a factor of $\epsilon \ll 1/2$.  Thus, the \AQF is
more aggressive about fixing repeated false positives.  However, the probability that $x$ collides
with future queries that are different from $y$ remains unchanged.   Thus, on average the \AQF does not fix serendipitous false positives. 

Our experiments (\secref{eval}) show that the gain
of serendipitous false positive fixes is short-lived; aggressively fixing false positives leads to better false-positive performance.

\pparagraph{Storing hash selectors in blocks.}
The \AQF does not have a constant number of bits per slot dedicated solely to storing its hash-selector value. 
Instead, we group the hash-selector values associated with each $\Theta(\log n)$ contiguous
slots (64 slots in our implementation) together in a block.  
We allocate a constant amount of space for each such block.  If we run out of space, we \defn{rebuild} by setting all hash-selector values in the block to $0$.  (After a rebuild, we still fix the false positive that caused the rebuild. Therefore, there will often be one non-zero hash-selector value in the block after a rebuild.)

\pparagraph{Encoding hash-selector bits.}
To store the hash selectors effectively, we need a code that satisfies the following requirements: the space of the code should be very close to optimal; the code should be able to use $<1$ bits on average per character encoded; and the encode and decode operations should be fast enough to be usable in practice.

In~\secref{impl}, 
we give a new implementation of the arithmetic coding that is tailored to our use case, specifically encoding characters from the distribution given in~\corref{hashprobcor}.  Our implementation uses only integers, and all divisions are implemented using bit shifts, leading to a fast and reliable implementation while still retaining good space bounds.

\section{Telescoping Adaptive Filter: Analysis}\label{sec:analysis}

In this section, we analyze the sustained false-positive rate, the
hash-selector probabilities, and the space complexity of the \aqf. 

We assume the \AQF uses a uniform random hash function $h$ 
such that the hash can be evaluated in $O(1)$ time.  
In our adaptivity analysis of the \AQF~(\thmref{fprate}), we first assume that the filter has sufficient space
to store all hash-selector values; that is, it does not rebuild. Then, in~\thmref{aqfspace}, we give
a bound on the number of unique queries that the \AQF can handle (based on its size) without the need
to rebuild, thus maintaining adaptivity.  

\pparagraph{Adaptivity.} We first prove that the \aqf is adaptive, i.e., it guarantees a sustained false positive rate of $\epsilon$.  

We say a query $x$ has a \defn{soft collision} with an element $y \in \calS$ if their quotients are the same: $q(x) = q(y)$.  We say a query $x$ has a \defn{hard collision} with an element $y \in \calS$ if both their quotients and remainders are the same: $q(x) = q(y)$ and $r_i(x) = r_i(y)$, where $i$ is the hash-selector value of $y$ at the time $x$ is queried (see \secref{aqf}).

\begin{theorem}%
	\label{thm:fprate}
Consider a \aqf storing a set $\calS$ of size $n$. For any adaptively generated sequence of $t$ queries $Q = x_1, x_2, \ldots, x_t$ (possibly interleaved
with insertions), where each $x_i \notin \calS$, the \TAF has a sustained false-positive rate of $\epsilon$; that is, $\Pr[\text{$x_i$ is a false positive}] \leq \epsilon$
for all $1 \leq i \leq t$. 
\end{theorem}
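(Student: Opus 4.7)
The plan is to fix the query time $i$ and bound $\Pr[x_i \text{ is a false positive}]$ by a union bound over the elements of $\calS$, showing that each element $y$ contributes at most $\epsilon/n$ to this probability. A false positive at time $i$ occurs iff some $y\in\calS$ \emph{hard-collides} with $x_i$, i.e., $q(x_i)=q(y)$ and $r_{k_i(y)}(x_i)=r_{k_i(y)}(y)$, where $k_i(y)$ is $y$'s hash-selector value at time $i$. The quotient uses the first $\log n$ bits of $h(x_i)$, and the selector-$k$ remainder uses the disjoint window of $\log(1/\epsilon)$ bits starting at position $\log n + k\log(1/\epsilon)$. Hence, if those specific bits of $h(x_i)$ are uniform conditional on the adversary's transcript at time $i$, then each term in the union bound is $(1/n)\cdot\epsilon$, and summing over $y\in\calS$ gives exactly $\epsilon$.

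The heart of the argument is showing that the relevant bits of $h(x_i)$ remain uniform given the adversary's history. First I would dispose of the case where $x_i$ has never been queried or inserted before time $i$: by the cryptographic hash assumption in Section~\ref{sec:prelim}, $h(x_i)$ is independent of everything else the adversary has seen, so uniformity of all of its bits is immediate. Otherwise, I would trace through what each previous query of $x_i$ reveals. Such a query discloses only whether $r_j(x_i)=r_j(y')$ for the soft-colliding elements $y'$, where $j$ was $y'$'s selector at that moment. A match triggers an adapt that strictly increments the matching selector (using the theorem's standing assumption of no rebuild), so $y$'s current selector $k_i(y)$ was never the selector of $y$ at a past matching query of $x_i$. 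A mismatch reveals only an inequality on the selector-$j$ window, which is disjoint from the selector-$k_i(y)$ window whenever $j \neq k_i(y)$; and when $j=k_i(y)$, the inequality actually forces $\Pr[r_{k_i(y)}(x_i)=r_{k_i(y)}(y)]=0$, strictly helping us.

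I expect the main obstacle to be formalizing this conditional-independence argument cleanly, since $k_i(y)$ is itself a random variable depending on the entire history of hash values, insertions, and adversarial choices. I would handle this by conditioning on a full transcript of the adaptivity game up to, but not including, the current remainder comparison; this fixes $k_i(y)$ for every $y \in \calS$ and exposes every bit of $h$ the adversary can deduce. Under this conditioning, the bits of $h(x_i)$ in the selector-$k_i(y)$ window are either (i) never constrained and thus uniform by the above case analysis, or (ii) constrained to \emph{not} equal $r_{k_i(y)}(y)$, which only decreases the collision probability. Interleaved insertions do not help the adversary because they depend on hash values of elements distinct from $x_i$, which are independent of $h(x_i)$ under the random-oracle model assumed at the start of Section~\ref{sec:analysis}. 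Summing the per-element bound $\epsilon/n$ over $\calS$ then yields the desired sustained false-positive rate of $\epsilon$.
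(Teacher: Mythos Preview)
Your proposal is correct and follows essentially the same route as the paper: union bound over $y\in\calS$, split into first-time versus repeated queries, and for repeated queries argue that the relevant remainder comparison is still a fresh $\epsilon$-probability event. The paper's organization is slightly tighter---rather than tracing all past occurrences of $x_i$, it looks only at the most recent one (index $j$) and splits on whether $y$'s selector has changed since then, which cleanly separates the deterministic-mismatch case (your case~(ii)) from the fresh-remainder case.

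One point worth tightening: your dichotomy that the selector-$k_i(y)$ window of $h(x_i)$ is ``either unconstrained or constrained to not equal $r_{k_i(y)}(y)$'' is not quite complete. A past query of $x_i$ may have been compared against some \emph{other} soft-colliding element $y'\neq y$ whose selector at that moment happened to equal $k_i(y)$; a match there pins $r_{k_i(y)}(x_i)=r_{k_i(y)}(y')$, which is a constraint on that window involving $y'$, not $y$, and falls into neither of your two cases. This does not actually break the bound---$r_{k_i(y)}(y)$ is a chunk of $h(y)$ independent of $h(y')$, so the collision probability with $y$ remains $\epsilon$---but your case analysis as written does not cover it. The paper sidesteps this wrinkle by reasoning about the freshness of $y$'s \emph{new} remainder $r_{k_i(y)}(y)$ (which has never been compared against $x_i$, since $x_i$ was not queried after $y$'s selector reached $k_i(y)$) rather than about uniformity of $x_i$'s hash bits.
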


\begin{proof}
Consider the $i$-th query $x_i \in Q$. 
Query $x_i$ is a false positive if there exists an element $y \in \calS$ such that there is hard collision between them.  
Let $h_i(y) = q(y) \circ r_k(y)$ denote the fingerprint of $y$ at time $i$, where $y$ has the hash-selector value $k$ at time $i$.
Then, $x_i$ and $y$ have a hard collision if and only if $h_i(x_i) = h_i(y)$.

We show that for any $y$, regardless of answers to previous queries, $x_i$ and $y$ have a hard collision with probability $\epsilon/n$; taking a union bound over all elements gives the theorem.

We proceed in cases. First, if $x_i$ is a first-time query, that is, $x_i \notin \{x_1, \ldots, x_{i-1}\}$,
then the probability that $h_i(x_i) = h_i(y)$ is the probability that both their quotient
and remainder match, which occurs with probability $2^{-(\log n + \log 1/\epsilon)} = \epsilon/n$.

Next, suppose that $x_i$ is a repeated query, that is, $x_i \in \{x_1, \ldots, x_{i-1}\}$.  
Let $j<i$ be the largest index where $x_i=x_j$ was previously queried.  If $x_j$ did not have a soft collision
with $y$, that is, $q(x_j) \neq q(y)$, then $x_i$ cannot have a hard collision with $y$.  
Now suppose that $q(x_j) = q(y)$.  We have two subcases.
\begin{enumerate}
\item $y$'s hash-selector value has not changed since $x_j$ was queried.  Note that,
in this case, $x_j$ must not have had a hard collision with $y$, as that would have caused $y$'s hash-selector value, and thus its remainder, to be updated. Thus, $h_j(y) = h_i(y) \neq h_j(x_j) = h_i(x_i)$.
\item $y$'s hash-selector value has been updated since $x_j$ was queried. 
  Such an update could have been caused by a further query to $x_j$ having a hard collision with $y$, or some other query $x_k \in x_j, x_{j+1}, \ldots, x_i$ having a hard collision with $y$.  In either case, the probability that
the new remainder matches, i.e., $r_i(y) = r_i(x_i)$, is $2^{-\log 1/\epsilon} = \epsilon$.
\end{enumerate}
Therefore, the probability that $x_i$ has a hard collision with $y$ is at most $\epsilon \cdot \Pr[q(x_j) = q(y)]
= \epsilon/n$.
Finally, by a union bound over $n$ possibilities for $y \in \calS$, we obtain that $\Pr[\text{$x_i$ is a false positive}] \leq \epsilon$ for all $1 \leq i \leq t$, as desired.
\end{proof}

\pparagraph{Hash-selector probabilities.}
The \aqf increments the hash-selector value of an element $y \in \calS$ whenever a false positive query collides with $y$. Here we analyze the probability of an element having a given hash-selector value.

\begin{lemma}%
\lemlabel{hashprob}
Consider a sequence $Q = x_1, x_2, \ldots, x_t$ of queries (interleaved with inserts),
where each $x_i \notin \calS$ and $Q$ consists of $cn$ unique queries (with any number of repetitions), where $c < 1/\epsilon-1$.
Then for any $y \in \calS$, if $v(y)$ is the hash-selector value of $y$ after all queries in $Q$ are performed, then:
\[
  \Pr[v(y) = k] 
  \left\{
    \begin{array}{ll}
      = (1- \frac \epsilon n)^{cn}  & \text{if } k = 0 \\
      \leq  \epsilon^k (1-\epsilon) \sum_{i=1}^k {cn \choose i} \frac{1}{n^i} & \text{if } k > 0
    \end{array}
  \right.
\]
\end{lemma}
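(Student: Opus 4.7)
The plan is to split on $k=0$ versus $k>0$ and exploit the independence of hash bits across distinct queries and across distinct levels.

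For $k=0$, the argument is immediate: $v(y)$ remains at $0$ throughout $Q$ if and only if no unique query $x$ ever triggers an increment. While $v(y)=0$, a unique query $x$ triggers an increment exactly when $q(x)=q(y)$ and $r_0(x)=r_0(y)$, and by the uniformity of $h$ this event has probability $\epsilon/n$. Since the $cn$ distinct unique queries have mutually independent hash values, the probability that all of them simultaneously fail to trigger an increment is $(1-\epsilon/n)^{cn}$.

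For $k>0$, I plan to union-bound over the set $U_{*}$ of distinct unique queries that collectively caused the $k$ hash-selector increments of $y$, letting $i=|U_{*}|$. Since each of the $k$ increments has a witness in $U_{*}$, we have $1\le i\le k$. Three factors enter the bound. First, the event that every $x\in U_{*}$ satisfies $q(x)=q(y)$ is an intersection of $i$ independent $1/n$ events, contributing $n^{-i}$. Second, the $k$ hard collisions at levels $0,1,\ldots,k-1$ require $k$ remainder-bit matches that are mutually independent across levels (different, disjoint slices of $h$) and therefore contribute $\epsilon^k$. Third, the process halts at level $k$ rather than continuing, and by the independence of the hash bits at level $k$ from everything earlier this contributes an additional $(1-\epsilon)$ factor. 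Multiplying and summing over the $\binom{cn}{i}$ choices of $U_{*}$ and over $i=1,\ldots,k$ produces $\epsilon^k(1-\epsilon)\sum_{i=1}^k\binom{cn}{i}n^{-i}$.

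The main obstacle is arranging the union bound so that the combinatorial prefactor is exactly $\binom{cn}{i}$ rather than a larger ordered count. A naive enumeration over witness sequences $(z_0,\ldots,z_{k-1})$ introduces a surjection factor $\mathrm{Surj}(k,i)=i!\,S(k,i)$ that the target bound does not have, so the bookkeeping must treat $U_{*}$ as an unordered set and push the ordering into a canonical structure. Similarly delicate is pinning the termination $(1-\epsilon)$ to a single canonical stopping event rather than one event per query in $U_{*}$ (which would give $(1-\epsilon)^i$ and weaken the inequality). The independence of hash bits across levels is what ultimately lets $\epsilon^k$ and $(1-\epsilon)$ cleanly factor out of the per-$i$ contribution, and getting that factorization right is the crux of the proof.
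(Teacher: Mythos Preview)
Your $k=0$ case matches the paper's argument exactly.

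For $k>0$, your route diverges from the paper's in a way that creates the very difficulty you flag as ``delicate.'' The paper does \emph{not} try to pin down a stopping event to harvest the $(1-\epsilon)$ factor. Instead it first bounds the tail,
\[
\Pr[v(y)\ge k]\;\le\;\epsilon^{k}\sum_{i=1}^{k}\binom{cn}{i}\frac{1}{n^{i}},
\]
by the same union bound over the $i$-element set of unique queries responsible for the first $k$ hard collisions (each such query contributes a quotient match of probability $1/n$ on its first appearance and $\epsilon$ per subsequent remainder match, so any composition $(k_1,\dots,k_i)$ with $\sum k_j=k$ yields the composition-independent bound $\epsilon^{k}/n^{i}$). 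It then obtains the $(1-\epsilon)$ purely algebraically from
\[
\Pr[v(y)=k]\;=\;\Pr[v(y)\ge k]-\Pr[v(y)\ge k+1],
\]
substituting the two tail expressions and simplifying.

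This sidesteps exactly the obstacle you identified. Your ``process halts at level $k$'' event does not have probability $(1-\epsilon)$ conditional on reaching level $k$: that conditional probability depends on which queries are still issued afterward (it could be $1$ if no further queries are made, or much smaller if many quotient-matching queries follow), so there is no single canonical $(1-\epsilon)$ event to point to. The subtraction trick is what actually produces the factor, and it also dissolves your worry about a surjection prefactor, since the ``at least $k_j$ collisions per query'' bound is the same for every composition and hence no sum over compositions ever appears.
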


\begin{proof}
First, consider the case $k=0$:  the hash-selector value of $y$
stays zero after all the queries are made if and only if
none of the queries have a hard collision with $y$.  Since
there are $cn$ unique queries, and the probability that
each of them has a hard collision with $y$ is $\epsilon/n$,
 the probability that none of them collide with $y$
	is $(1-\epsilon/n)^{cn}$.

Now, consider the case $k \geq 1$.  Given that the hash
selector value of $y$ is $k$, we know that there have been exactly $k$
hard collisions between queries and $y$ (where some of these
collisions may have been caused by the same query).  Suppose there
	are $i$ unique queries among all
queries that have a hard collision with $y$, where $1 \leq i \leq k$.  
Let $k_j$ be the number of times a query $j$ collides with $y$
causing an increment in its hash-selector value, where $1 \leq j \leq i$.  Thus, $\sum_{j=1}^i k_j = k$.

For a query $x_j$, the probability that $x_j$ collides with $y$, the first time $x_j$ is queried, is $\epsilon/n$.  
Then, given that $x_j$ has collided with $y$ once, the probability of any subsequent collision with $y$ is $\epsilon$. 
(This is because the $\log 1/\epsilon$ bits of the remainder of $y$ are updated
with each collision.)  
Thus, the probability that $x_j$ collides with $y$ at least $k_j$ times is $\frac \epsilon n \cdot \epsilon^{k_j -1}$. 

The probability that a query $x_j$ collides with $y$ at least $k_j$ times, is given by $\prod_{j=1}^i \frac \epsilon n \cdot \epsilon^{k_j -1} = \frac{\epsilon^k}{n^i}$. There are $cn \choose i$ ways of choosing $i$ unique queries from $cn$, for $1 \leq i \leq k$, which gives us
\begin{equation}\label{eq:atl}
  \Pr[v(y) \geq k] =  \epsilon^k \sum_{i=1}^k {cn \choose i } \frac{1}{n^i}
\end{equation}

Finally, using Inequality~\ref{eq:atl}, we can upper bound the probability that a hash-selector value is exactly $k$.
\begin{align*}
  \Pr[v(y) = k] &= \Pr [v(y) \geq k] - \Pr [v(y) \geq k+1]\\
		&= \epsilon^k \left[ \sum_{i=1}^k {cn \choose i} \frac{1}{n^i} -
           \epsilon \sum_{i=1}^{k+1} {cn \choose i} \frac{1}{n^i}\right]\\
                &= \epsilon^k \cdot (1-\epsilon) 
                  \left[ 
                  \sum_{i=1}^k {cn \choose i} \frac{1}{n^i} - 
                  \frac{\epsilon}{1-\epsilon} {cn \choose k+1} \frac{1}{n^{k+1}}
                  \right]\\
		&\leq \epsilon^k (1-\epsilon) \sum_{i=1}^k {cn \choose i} \frac{1}{n^i}
           \qedhere
\end{align*}
\end{proof}

We simplify the probabilities in Lemma~\ref{lem:hashprob} in~\corref{hashprobcor}. The probability bounds in~\corref{hashprobcor} closely match the distribution of hash-selector
frequencies we observe experimentally.

\begin{corollary}%
\corlabel{hashprobcor}
Consider a sequence $Q = x_1, x_2, \ldots, x_t$ of queries (interleaved with inserts),
where each $x_i \notin \calS$ and $Q$ consists of $cn$ unique queries (with any number of repetitions), where $c < 1/\epsilon-1$.
For any $y \in \calS$, if $v(y)$ is the hash-selector value of $y$ after all queries in $Q$ are performed, then:
\[
  \Pr[v(y) = 0] < \frac{1}{e^{c \epsilon}}, \text{ and }
  \Pr[v(y) = k] < \epsilon^k \sum_{i=1}^k \frac{c^i}{i!} \text{ for } k\geq1.
\]
\end{corollary}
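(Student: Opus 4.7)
The plan is to simply take the exact/upper-bound expressions from Lemma~\ref{lem:hashprob} and apply two standard estimates: the exponential bound $(1-x)^m \leq e^{-mx}$ for $0 \leq x \leq 1$, and the binomial coefficient bound $\binom{m}{i} \leq m^i/i!$.

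For the case $k=0$, the lemma gives $\Pr[v(y) = 0] = (1 - \epsilon/n)^{cn}$ exactly. I would apply $(1 - \epsilon/n)^{cn} \leq e^{-(\epsilon/n)\cdot cn} = e^{-c\epsilon}$, which is strict whenever $\epsilon > 0$, yielding the stated bound $\Pr[v(y)=0] < 1/e^{c\epsilon}$.

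For the case $k \geq 1$, the lemma gives $\Pr[v(y)=k] \leq \epsilon^k (1-\epsilon) \sum_{i=1}^k \binom{cn}{i} \frac{1}{n^i}$. I would bound each term of the sum using $\binom{cn}{i} \leq (cn)^i / i!$, so that $\binom{cn}{i}/n^i \leq c^i/i!$. Substituting, the sum is at most $\sum_{i=1}^k c^i/i!$. Dropping the $(1-\epsilon) < 1$ factor gives the stated bound $\Pr[v(y)=k] < \epsilon^k \sum_{i=1}^k c^i/i!$, where the strict inequality is inherited from dropping $(1-\epsilon)$.

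There is no real obstacle here; this is a purely mechanical simplification once Lemma~\ref{lem:hashprob} is in hand. The only thing to be mildly careful about is ensuring the inequalities are strict (as stated in the corollary) rather than weak, which follows either from $(1-x)^m < e^{-mx}$ for $x > 0$ in the $k=0$ case, or from $(1-\epsilon) < 1$ in the $k \geq 1$ case.
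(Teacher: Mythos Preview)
Your proposal is correct and matches the paper's proof essentially line for line: the paper applies $(1-1/x)^x \leq 1/e$ (equivalent to your $(1-x)^m \leq e^{-mx}$) for the $k=0$ case, and the same bound $\binom{cn}{i}/n^i \leq c^i/i!$ for $k\geq 1$, implicitly dropping the $(1-\epsilon)$ factor. There is nothing to add.
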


\begin{proof}
  To upper bound $\Pr[v(y) = 0]$, we use the inequality $(1-1/x)^x \leq 1/e$ for $x > 1$.  
To upper bound $\Pr[v(y) = k]$, we upper bound:
\[
  {cn \choose i} \frac{1}{n^i} \leq \frac{cn \cdot (cn-1) \cdots (cn-i)}{i!} \frac{1}{n^i} \leq \frac{c^i n^i}{i! n^i} = \frac{c^i}{i!} \qedhere
\]
\end{proof}

\pparagraph{Space analysis.}\label{sec:space}
Up until now, we have assumed that we always have enough room to store arbitrarily large hash selector values.  Next, we give a tradeoff between the space usage of the data structure and the number of unique queries it can support.

We use the hash-selector probabilities derived above to analyze the space overhead of storing hash-selector values.  
	\thmref{aqfspace} assumes an optimal arithmetic encoding: storing a hash-selector value $k$ that occurs with probability $p_k$ requires exactly $\log(1/p_k)$ bits.  In our implementation we use an approximate version of the arithmetic coding for the sake of performance.  

\begin{theorem}%
\thmlabel{aqfspace}
For any $\epsilon < 1/2$ and $b \geq 2$, 
given a sequence of $n/(b \sqrt{\epsilon})$ unique queries (with no restriction on the number of repetitions of each), 
the \aqf maintains a sustained false-positive rate of $\epsilon$ using at most $\left( \frac 1e + \frac{b}{(b-1)^2} \right)$ bits of space in expectation per element.
\end{theorem}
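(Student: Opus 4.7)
The plan is to use arithmetic coding, block by block, and bound the expected code length using the hash-selector distribution bounds from \corref{hashprobcor}. Since each block contains $\Theta(\log n)$ slots and arithmetic coding incurs $O(1)$ overhead per block, the per-element amortized overhead is $o(1)$ and can be absorbed. So the real task is to upper bound the cross-entropy $\sum_k p_k \log(1/q_k)$ for a carefully chosen prior $q_k$, where $p_k = \Pr[v(y) = k]$ satisfies the bounds from \corref{hashprobcor} with $c = 1/(b\sqrt{\epsilon})$, hence $c\epsilon = \sqrt{\epsilon}/b$.

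I plan to split the expected code length into the $k=0$ contribution and the $k \geq 1$ contribution and handle them separately with different pieces of the prior. For $k = 0$, \corref{hashprobcor} gives $p_0 < e^{-\sqrt{\epsilon}/b}$, so the natural ``indicator'' that $v(y) = 0$ contributes roughly $-p_0 \log p_0$ to the expected length. Using the standard inequality $-p \ln p \leq 1/e$ (maximized at $p = 1/e$) and tuning the prior's mass at $k=0$ to match the upper bound on $p_0$, I expect this piece to come out to $1/e$ bits in expectation per element.

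For $k \geq 1$, I plan to use a geometric prior $q_k \propto b^{-k}$ on the tail, so that encoding $v(y) = k$ costs $k \log b + O(1)$ bits. The expected contribution is then $\log b \cdot E[v(y)] + O(1)\cdot \Pr[v(y)\geq 1]$. I compute $E[v(y)] = \sum_{k \geq 1} \Pr[v(y) \geq k]$ via the \corref{hashprobcor} bound and swap the order of summation:
\begin{align*}
E[v(y)] \;\leq\; \sum_{k=1}^\infty \epsilon^k \sum_{i=1}^k \frac{c^i}{i!} \;=\; \frac{1}{1-\epsilon}\sum_{i=1}^\infty \frac{(c\epsilon)^i}{i!} \;=\; \frac{e^{\sqrt{\epsilon}/b} - 1}{1-\epsilon}.
\end{align*}
Using $\epsilon < 1/2$ and $b \geq 2$ to simplify this tail, together with the generating-function identity $\sum_{k \geq 0} k\, b^{-k} = b/(b-1)^2$, the $k \geq 1$ contribution collapses to at most $b/(b-1)^2$. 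Adding the two pieces gives the stated bound of $\left(\frac{1}{e} + \frac{b}{(b-1)^2}\right)$ bits per element in expectation.

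The main obstacle will be calibrating the split of the prior between the $k=0$ atom and the geometric tail so that both pieces simultaneously achieve their clean closed forms. In particular, the inequality $-p \ln p \leq 1/e$ is tight at $p = 1/e$, but our $p_0$ can be much larger than $1/e$ for small $\sqrt{\epsilon}/b$, so we must either choose the prior mass at $0$ carefully (rather than setting $q_0 = p_0$) or, alternatively, replace the entropy argument with a direct code: one unary/Huffman bit to signal $v(y) > 0$, followed by a $\log b$-bit-per-increment geometric code conditional on $v(y) \geq 1$. A secondary subtlety is that \corref{hashprobcor} is valid only for $c < 1/\epsilon - 1$, which must be verified from $c = 1/(b\sqrt{\epsilon})$ using $\epsilon < 1/2$ and $b \geq 2$, and the $O(1)$-per-block arithmetic coding slack must be explicitly absorbed into the $1/e$ term (via the $\Theta(\log n)$ block size).
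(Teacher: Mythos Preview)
Your cross-entropy route with a geometric prior does not deliver the $b/(b-1)^2$ term, and this is a genuine gap. With $q_k \propto b^{-k}$ on $k\geq 1$, the tail contribution is (up to additive constants times $\Pr[v(y)\geq 1]$) essentially $\log b \cdot E[v(y)]$. Plugging in your own bound $E[v(y)] \leq (e^{\sqrt{\epsilon}/b}-1)/(1-\epsilon)$ and taking, say, $b=4$ with $\epsilon$ near $1/2$ gives $\log_2 4 \cdot E[v(y)] \approx 2\cdot 0.39 = 0.77$, which already exceeds $4/9 \approx 0.44$; adding the $\Pr[v(y)\geq 1]\cdot\log(1/(1-q_0))$ term only makes it worse. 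The identity $\sum_{k\geq 1} k\,b^{-k} = b/(b-1)^2$ simply does not arise from $\log b\cdot E[v(y)]$; it arises in the paper from a different mechanism.

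The paper does not use a prior at all: the theorem is stated under optimal arithmetic coding, so the quantity to bound is the entropy $\sum_k p_k\log(1/p_k)$ itself. For $k=0$ the paper works directly with the exact $p_0 = (1-\epsilon/n)^{cn}$ from \lemref{hashprob} (not the corollary bound), obtaining $p_0\log(1/p_0) \leq c\epsilon\,e^{-c\epsilon} < 1/e$ via $x e^{-x} < 1/e$; no ``prior calibration'' is needed. For $k\geq 1$ the key trick you are missing is: first bound $p_k \leq k(\epsilon c)^k = (k/b^k)\,\epsilon^{k/2}$, then verify this upper bound is itself $<1/e$, so that monotonicity of $x\log(1/x)$ on $(0,1/e)$ lets you substitute the \emph{upper bound} into \emph{both} factors of $p_k\log(1/p_k)$. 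After that substitution, the inequality $\sqrt{x}\,\log(1/x) < 1$ (applied with $x=\epsilon^k$) cancels the $\epsilon^{k/2}$ against the $\log$ term, leaving exactly $\sum_{k\geq 1} k/b^k = b/(b-1)^2$. Your $E[v(y)]$ computation and the block-overhead discussion are unnecessary for this theorem; the whole argument is a direct entropy estimate.
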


\begin{proof}
Let $c= 1/(b \sqrt{\epsilon})$; thus, there are $cn$ unique queries. Consider an arbitrary element $y \in \calS$.  
The expected space used to store
the hash-selector value $v(y)$ of $y$ is $\sum_{k=0}^{\infty} p_k \log 1/p_k$, where
$p_k$ is the probability that $v(y) =k$.

We separate out the case where $k=0$, for which $p_k$ is the largest, and upper bound the $p_0 \log 1/p_0$ term below,
using the probability derived in~\lemref{hashprob}.
\begin{align}
p_0 \log 1/p_0 &= (1-\epsilon/n)^{cn} \log {\frac{1}{(1-\epsilon/n)^{cn}}}
               \leq \frac{1}{e^{c\epsilon}} \cdot \log ({1+ \frac{\epsilon}{n})^{cn}} \nonumber\\  
	       &=  \frac{1}{e^{c\epsilon}} \cdot cn \log (1+\frac{\epsilon}{n})
	\leq \frac{1}{e^{c\epsilon}} \cdot cn \frac{\epsilon}{n} = \frac{c \epsilon}{e^{c\epsilon}} < \frac{1}{e} \label{eq:last} 
\end{align} 
	In step~(\ref{eq:last}) above we use the fact that $x/e^x < 1/e$ for all $x>0$.

We now upper bound the rest of the summation, that is, $\sum_{k= 1}^{\infty} p_k \log 1/p_k$ for $k \geq 1$.  
When upper bounding this summation we will be using upper bounds on $p_k$---but this is a \emph{lower} bound on $\log 1/p_k$.  To deal with this, we observe
that the function $x \log 1/x$ is monotonically increasing for $x < 1/e$.  
Therefore, if we show that the bounds in Corollary~\ref{cor:hashprobcor} never exceed $1/e$, we can substitute both terms in $p_k\log 1/p_k$ in our analysis.
We start by showing this upper bound.  In the following, we use $b\geq 2$ and $\epsilon < 1/2$.
\[
p_k < \epsilon^k \sum_{i=1}^k \frac{c^i}{i!} 
    < \epsilon^k c^{k} \cdot k < \epsilon^k \cdot {\left(\frac{1}{b\sqrt{\epsilon}}\right)}^{k} \cdot k \nonumber
    = \frac{k}{b^k} \cdot \epsilon^{k/2} < k \cdot \frac{1}{2^{3k/2}} < \frac 1e. 
	\]

We now upper bound the sum $\sum_{k=1}^{\infty} p_k \log 1/p_k$ by replacing $p_k$
with its upper bound $\epsilon^k c^{k} \cdot k$  (this replacement is an upper bound because we showed $\epsilon^k \sum_{i=1}^k \frac{c^i}{i!} < 1/e$ above).

\begin{align}
\sum_{k=1}^{\infty} p_k \log 1/p_k &\leq \sum_{k \geq 1} k \epsilon^k c^k  \log \frac{1}{\epsilon^k c^k k}
      = \sum_{k=1}^{\infty}  \frac{k}{b^k} \cdot \left( \epsilon^{k/2} \cdot \log 1/\epsilon^k \right) \label{newstep:replace}\\
      &< \sum_{k=1}^{\infty} \frac{k}{b^k} = \frac{b}{(b-1)^2}.  \label{newstep:series}
\end{align}
We simplify step~(\ref{newstep:replace}) above using the fact that $\sqrt{x} \log 1/x <1$ for all $x \leq 1$; step~(\ref{newstep:series}) is a known identity.

Thus,  $\sum_{k=0}^{\infty} p_k \log 1/p_k < 1/e + b/(b-1)^2$, which is the expected number of bits used
to store the hash-selector value of $y$. 
\end{proof}

\thmref{aqfspace} implies that if the \AQF is using
a certain number of bits per element in expectation to store hash-selector values, 
then there is a precise bound on the number of unique queries it can handle in any query workload 
while being provably adaptive.  
For example, if $\epsilon = 1/2^8$ and we set $b=4$ in \thmref{aqfspace},
then a \aqf that uses $4/9+1/e \approx 0.812$ bits per element in expectation can handle $4n$ 
unique queries without running out of space and having to rebuild.  
In Section~\ref{sec:eval}, the \AQF outperforms this bound, retaining good performance with $0.812$ bits per element for $A/S \leq 20$.

\section{Implementation}\seclabel{impl}

In this section, we describe the implementation of the \AQF and our implementation of the broom filter~\cite{bender2018bloom}, which we call the {\em extension adaptive filter}  (\eAF).

Recall that adaptive filters have a local state $\local$ and a remote representation $\remote$.

\pparagraph{Rank-and-select quotient filter.} The local state $\local$ of both the \AQF and \eAF is implemented as a rank-and-select quotient filter (RSQF)~\cite{PandeyBeJo17}.
The RSQF stores metadata bits---one \defn{occupied} bit and one \defn{runend} bit for each slot.
The occupied bit associated with slot $i$ indicates whether any elements with the quotient $i$ have been inserted into the filter.
The runend bit associated with slot $i$ tracks whether the remainder placed in slot $i$ is the last remainder
in a contiguous run of remainders with the same quotient.
These metadata bits are sufficient to find the original slot of an element, but processing them bit-by-bit can be slow. 
The RSQF cleverly uses {\em rank} and {\em select} operations to quickly jump to the original slot~\cite{BenderFaJo12}.
These operations are efficiently implemented using x86 instructions on 64-bit words.  

To improve cache efficiency, the RQSF stores remainders (along
with their 2 metadata bits) in $64$-element {\em blocks}.  In particular, each block stores $64$ contiguous 
remainders and two $64$-bit metadata arrays. To search through the blocks efficiently, an \defn{offset} (stored using at most $8$ bits)
is stored for each block. The offset of a location $i$ is the distance between $i$ and $i$'s associated runend.  Each
block stores the offset of its first slot.  
In total, the 
RSQF stores $2.125$ metadata bits
per element in the filter. 

\pparagraph{Arithmetic coding on integers.}
Arithmetic coding can give theoretically optimal compression, but the standard implementation 
that recursively divides the unit interval relies on floating point operations.  These floating
point operations are slow in practice, and involve precision issues that can lead to incorrect answers or inefficient representations.  
In our implementation, 
we avoid these issues by applying arithmetic coding to a range of integers, $\{0,\ldots, 2^{k}-1\}$ for the desired code length $k$, instead of the unit interval.
We set $k=56$, encoding all hash-selector values for a block in a 56-bit word.
When multiplying or dividing integral intervals by probabilities in $[0, 1]$, 
we approximate floating point operations 
using integer shifts and multiplications.

\pparagraph{Remote representation.}
We implement $\remote$ for both filters as an array storing elements in the set $\calS$, along with their associated hashes.  
We keep $\remote$ in sync with $\local$: if the remainder $r(x)$ is stored in slot $s$ in $\local$, then $x$ is stored in slot $s$ in $\remote$.
This leads to easy lookups: to lookup an element $x$ in $\remote$, we simply check the slot $\remote[s]$ where $r(x)=\local[s]$. 
Insertions that cause remainders to shift in $\local$ are expensive, however, as we need to shift elements in $\remote$ as well.

\pparagraph{\AQF implementation.}
The local state of \AQF is an RSQF where each block of 64 contiguous elements stores the remainders of all elements, all metadata bits (each type stored in a 64-bit word), an 8-bit offset, and a 56-bit arithmetic code storing hash-selector values.

\AQF's inserts are similar to the RSQF, which may require shifting remainders.
The \AQF updates 
the hash-selector values
of all blocks that are touched by the insertion.

Our implementation uses MurmurHash~\cite{murmurhash} which has a 128-bit output.  
We partition the output of MurmurHash into the quotient, followed by chunks of size $\log(1/\epsilon)$, where each chunk corresponds to one remainder.  
Each time we increment the hash-selector value, we just slide over $\log (1/\epsilon)$ bits to obtain the new remainder. 

On a query $x$, the \AQF goes through each slot $s$ corresponding to quotient $q(x)$ and compares the remainder
stored in $s$ to $r_i(y)$, where $i$ is the hash-selector value of $s$, retrieved by decoding the blocks associated with each $s$.
If they match, the filter returns ``present'' and checks $\remote$ to determine if $x \in \calS$. If $x\notin \calS$, the filter increments
the hash-selector $i$ of $x$ and updates the arithmetic code of the block containing $x$.

If the 56-bit encoding fails, we \defn{rebuild}: we set all hash-selector bits in the block to 0, and then attempt to fix the false positive again.

\pparagraph{\eAF implementation.} 
Our implementation of the broom filter, which we call the \eAF, maintains its local state as a blocked RSQF, similar to the \AQF\@.  The main difference between
the two filters is how they adapt.  The \eAF implements the broom filter's adapt policy of lengthening fingerprints.  To do this efficiently, we follow a strategy similar to the \AQF\@.  We divide the data structure into blocks of 64 elements, storing all extensions for a single block into an arithmetic code that uses at most 56 bits.  

The \eAF's insertion algorithm resembles the RSQF and broom filter's insertion algorithms.
However, while the broom filter adapts on inserts to ensure that all stored fingerprints are unique, the \eAF does not adapt on inserts, and may have duplicate fingerprints.

During a query operation, the \eAF first performs an RQSF query: 
it finds if there is a stored element whose quotient and remainder bits match, without accessing any extension bit.  Only if these match 
does it decode the block's arithmetic code, allowing it to check extension bits.  This makes queries in the \eAF faster compared to \AQF, which must perform decodes on all queries.
If the full fingerprint of a query $y$ collides with an element $x \in \calS$, the filter returns ``present'' and checks $\remote$ to determine if
$x \in \calS$.  
If $x\notin \calS$, the \eAF adapts by adding extension bits to $f(x)$ by decoding the block's arithmetic code, updating $x$'s extension bits, and re-encoding.

As in the \AQF, if the 56-bit encoding fails, the \eAF rebuilds by setting all adaptivity bits in the block to 0, and then attempts to fix the false positive again.

\section{Evaluation}\seclabel{eval}
In this section, we empirically evaluate the \aqf and the \eAF.

We compare the false-positive performance of these filters to the Cuckooing ACF, the Cyclic ACF (with $s=1,2,3$ hash-selector bits), and the Swapping ACF. 
The Cyclic ACF and the Cuckooing ACF use $4$ random hashes to choose the location of each element, and have bins of size $1$.  The Swapping ACF uses $2$ location hashes and bins of size $4$. 

We compare the throughput of the \AQF and \eAF against the vacuum filter~\cite{wang2019vacuum},
our implementation of the RSQF, and a space-inefficient version of the \TAF that does not perform arithmetic coding operations.

\pparagraph{Experimental setup.}
We evaluate the filters in terms of the following parameter settings.
\begin{itemize}
\item {\em Load factor.}  
	For the false-positive tests, we use a load factor of .95.  We evaluate the throughput on a range of load factors. 
\item \emph{Fingerprint size:} We set the fingerprint size of each filter so that they all use the same amount of space.  We use $8$-bit remainders for the \AQF.  Because the \AQF has three extra bits per element for metadata and adaptivity, this corresponds to fingerprints of size $11$ for the Swapping and Cuckooing ACF, and size $11-s$ for a Cyclic ACF with $s$ hash-selector bits.
\item {\em $A/S$ ratio.}  The parameter $A/S$ (shorthand for $|A|/|\calS|$) is the ratio of the number of unique queries in the query set $A$ and the 
size of the filter's membership set $\calS$.  Depending on the structure of the queries, a higher $A/S$ value may indicate a more difficult workload, as ``fixed'' false positives are separated by a large number of interspersed queries.

\end{itemize}

All experiments were run on a workstation with Dual Intel Xeon Gold 6240 18-core 2.6 Ghz processors with 128G memory (DDR4 2666MHz ECC).
All experiments were single-threaded.

\subsection{False Positive Rate}

\pparagraph{Firehose benchmark.}
We measure the false positive rate on data generated by the Firehose benchmark suite~\cite{FirehoseSite, AndersonPl15} which simulates a real-world cybersecurity workload. Firehose has two generators:  \defn{power law} and \defn{active set}; we use data from both. 

The active set generator generates 64-bit unsigned integers 
from a continuously evolving ``active set'' of  keys.   
The probability with which an individual key is sampled
varies in time according to a bell-shaped curve to create a ``trending effect''
as observed in cyberstreams~\cite{FirehoseSite}.
We generated 10 million queries using the active set generator. We set the value \texttt{POW\_EXP} in the active set generator to 0.5 to encourage query repetitions.  (Each query is repeated approximately 57 times on average in our final dataset.)  

We then generated 50 million queries using the power-law generator, which generates queries using a power-law distribution.  
This dataset had each query repeated many times; each query was repeated 584 times on average.

\begin{figure}[ht!]
\centering
\includegraphics[scale=.41]{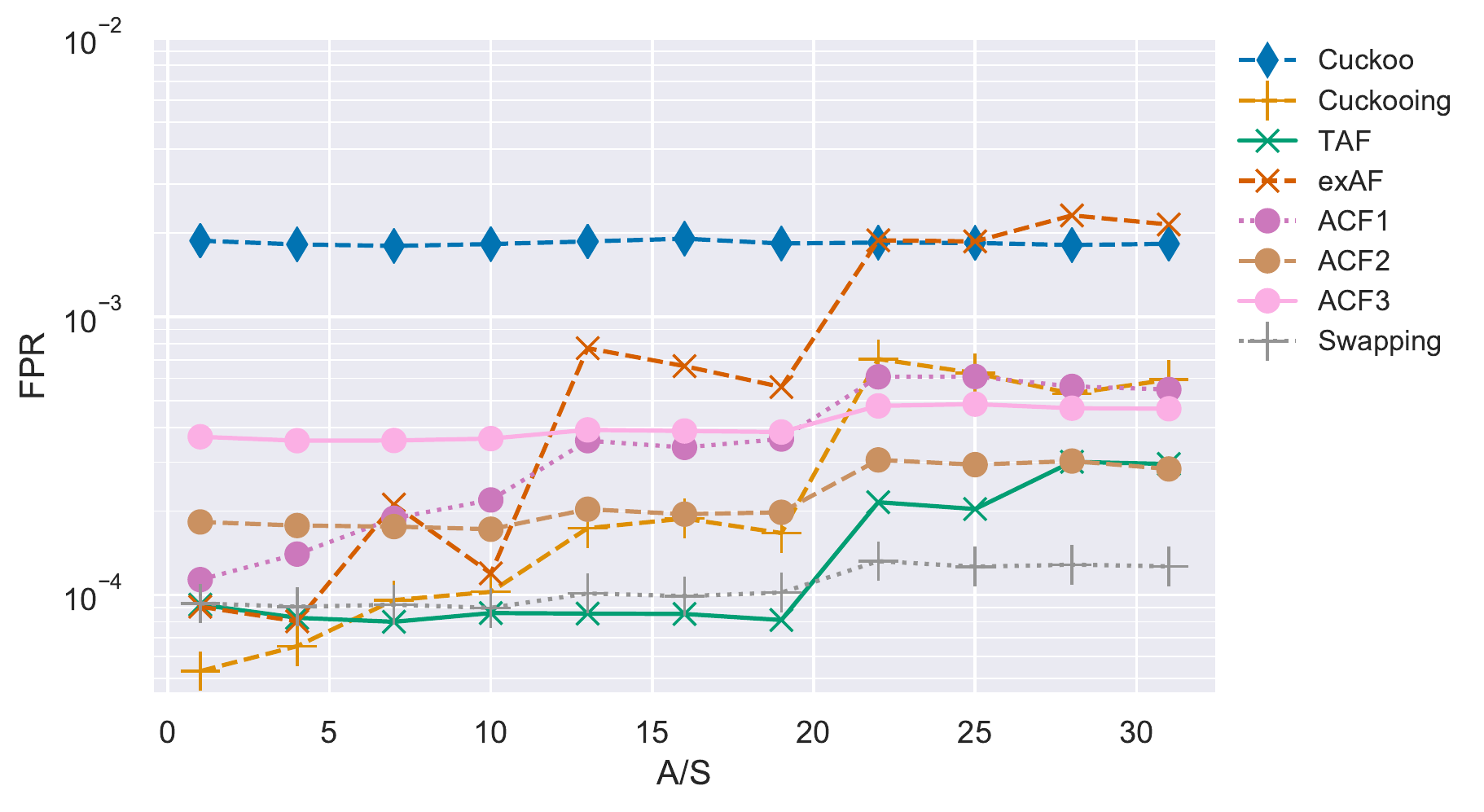}
\hspace{-.59in}
\includegraphics[scale=.41]{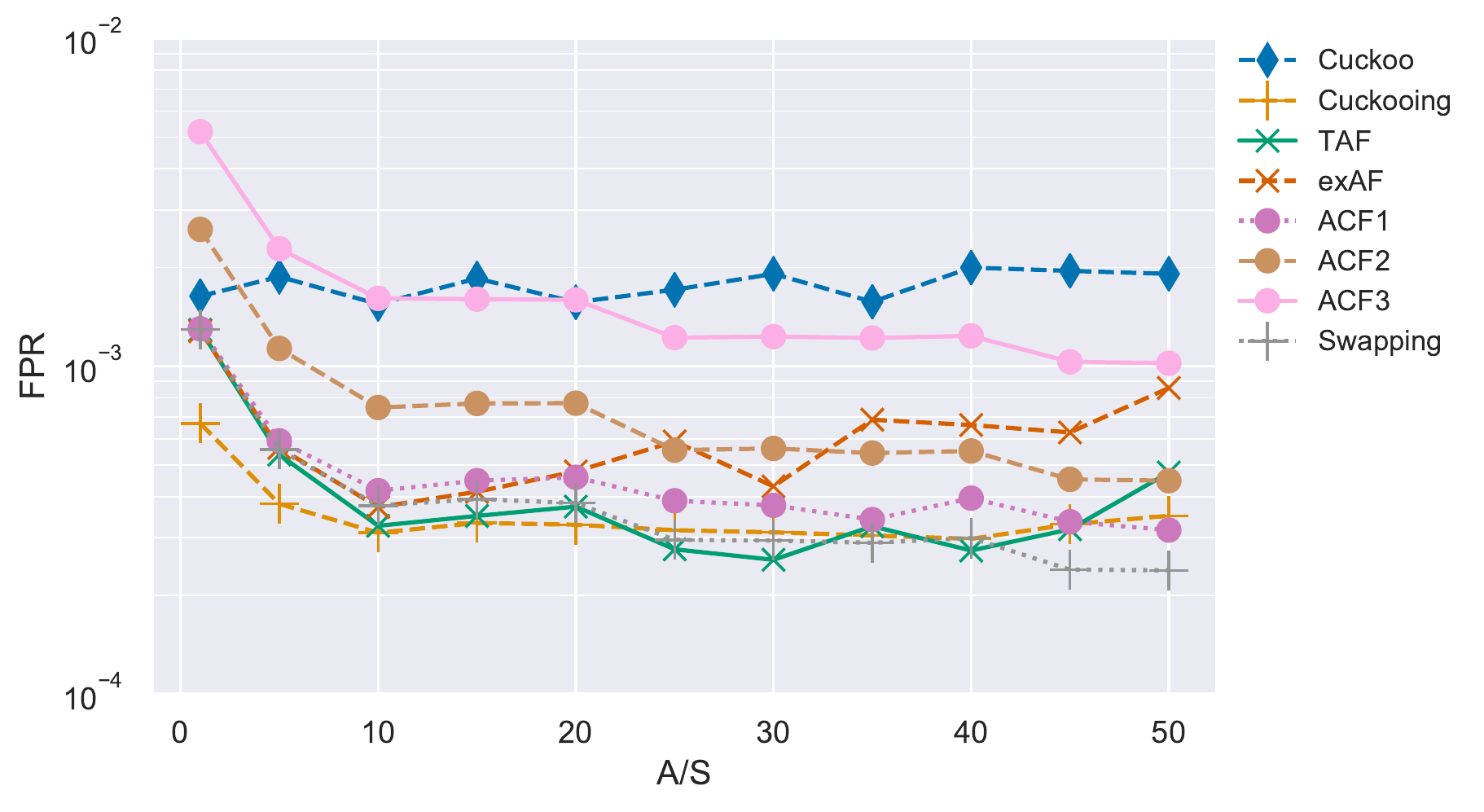}
\vspace{-.1in}
\caption{False positive rates on the firehose benchmarks.  The plot on the left uses the active set generator; the plot on the right uses the power-law generator.}%
\label{fig:sandia}
\end{figure}

In our tests we vary the size of the stored set $\calS$ (each uses the same input, so $|A|$ is constant).  The results are shown in Figure~\ref{fig:sandia}; all data points are the average of 10 experiments.  ACF1, ACF2, and ACF3 represent the Cyclic ACF with $s=1,2,3$ respectively.

For the active set generated data, the \AQF is the best data structure for moderate $A/S$.  Above $A/S \approx 20$, rebuilds become frequent enough that \AQF performance degrades somewhat, after which its performance is similar to that of the Cyclic ACF with $s=2$ (second to the Swapping ACF).  This closely matches the analysis in Section~\ref{sec:analysis}.

For the power law data, the \AQF is competitive for most $A/S$ values, although again it is best for moderate values.  

Notably, in both cases (and particularly for the active set data), the \eAF performs substantially worse than the \AQF.  This shows
that given the space amount of extra bits per element on average, the \TAF uses them more effectively towards adaptivity than the \eAF.

\pparagraph{Network Traces.}
We give experiments on three network trace datasets from the CAIDA 2014 dataset, replicating the experiments of Mitzenmacher et al.~\cite{MitzenmacherPo17}.  
We use three network traces from the CAIDA 2014 dataset,
specifically:
\begin{itemize} 
	\item \texttt{equinix-chicago.dirA.20140619} (``Chicago A'', Figure~\ref{fig:app-network}) 
	\item \texttt{equinixchicago.dirB.20140619-432600} (``Chicago B'', Figure~\ref{fig:app-network}), and 
	\item \texttt{equinix-sanjose.dirA.20140320-130400} (``San Jose'', Figure~\ref{fig:extremes}).
\end{itemize}

On network trace datasets, most filters are equally effective at fixing false positives, and their performance is determined mostly by their \defn{baseline false positive rate}, that
is, the probability with which a first-time query is a false positive. If $s$ bits are used for adaptivity, that increases the baseline FP rate by $2^s$,
compared to when those bits are used towards remainders. This gives the Cuckooing ACF an advantage as it uses $0$ bits for adapting.  

The \AQF and \eAF perform similarly to the Swapping ACF and ACF1 (Cyclic ACF with $s=1$) on these datasets.

\begin{figure}[ht!]
\centering
\includegraphics[scale=.37]{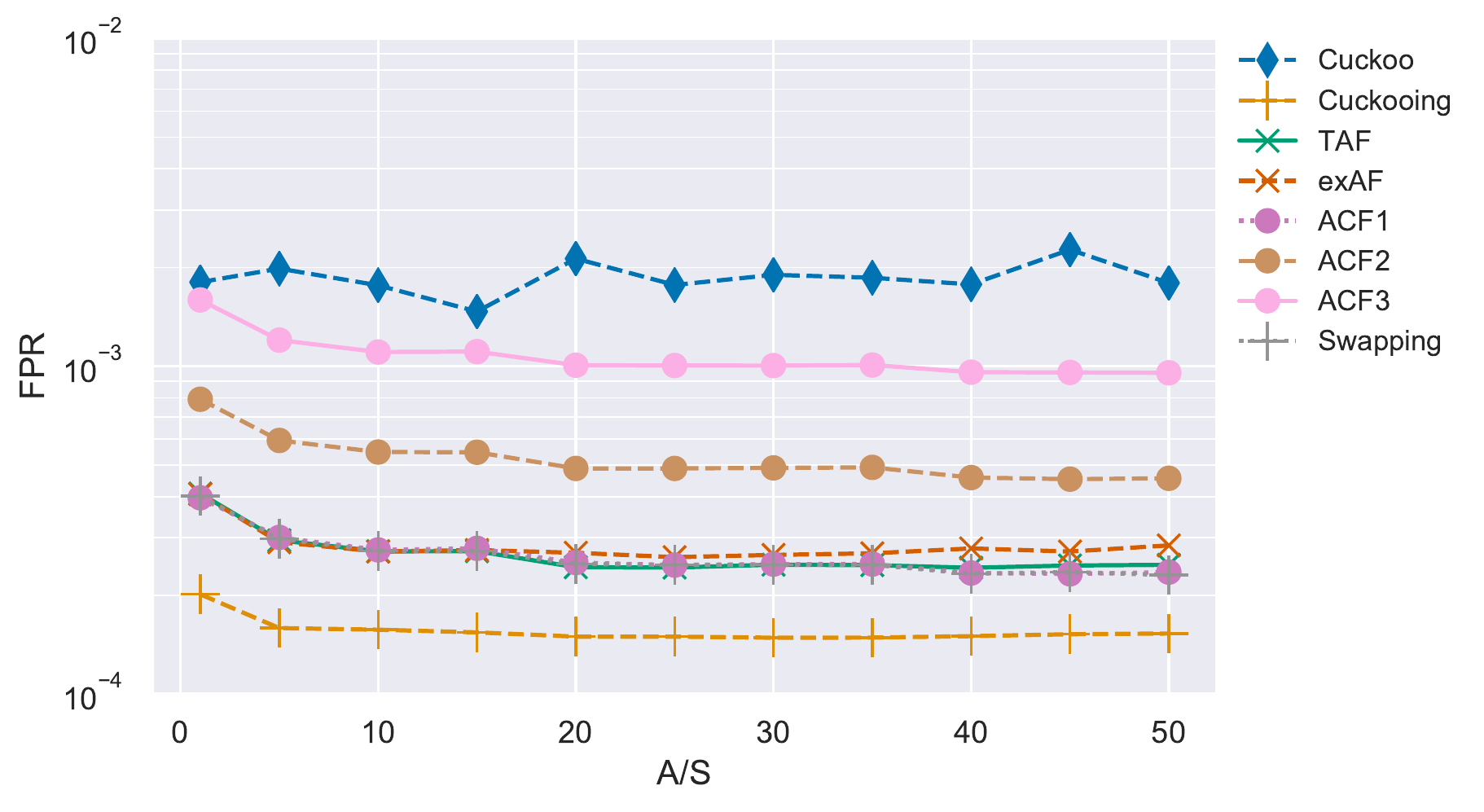}
\includegraphics[scale=.37]{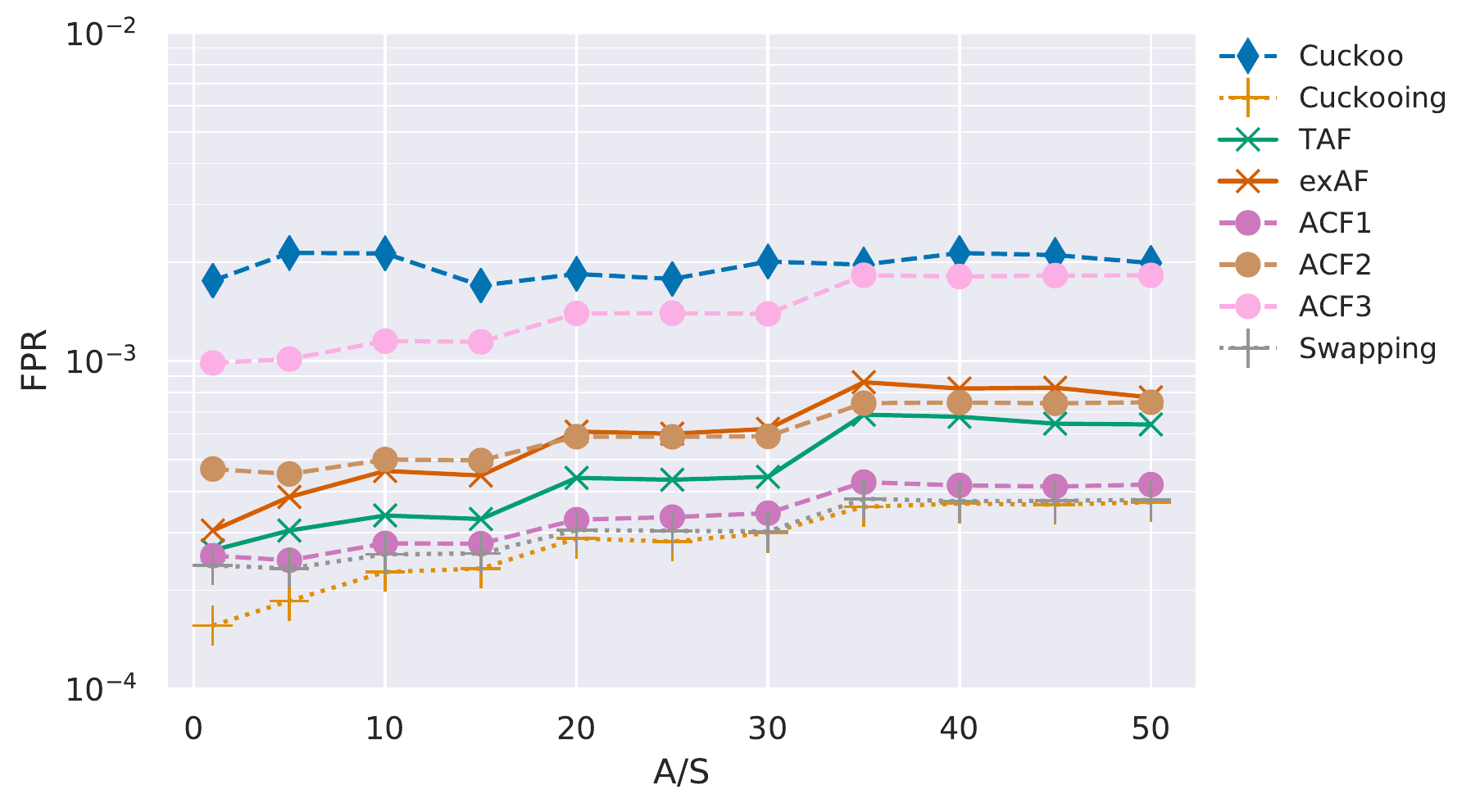}
\caption{False positive performance of the filters on network trace data. The Chicago A dataset is used on the left, and the Chicago B dataset is on the right.}\label{fig:app-network}
\end{figure}

\begin{figure}[ht!]
\centering
\includegraphics[scale=.4]{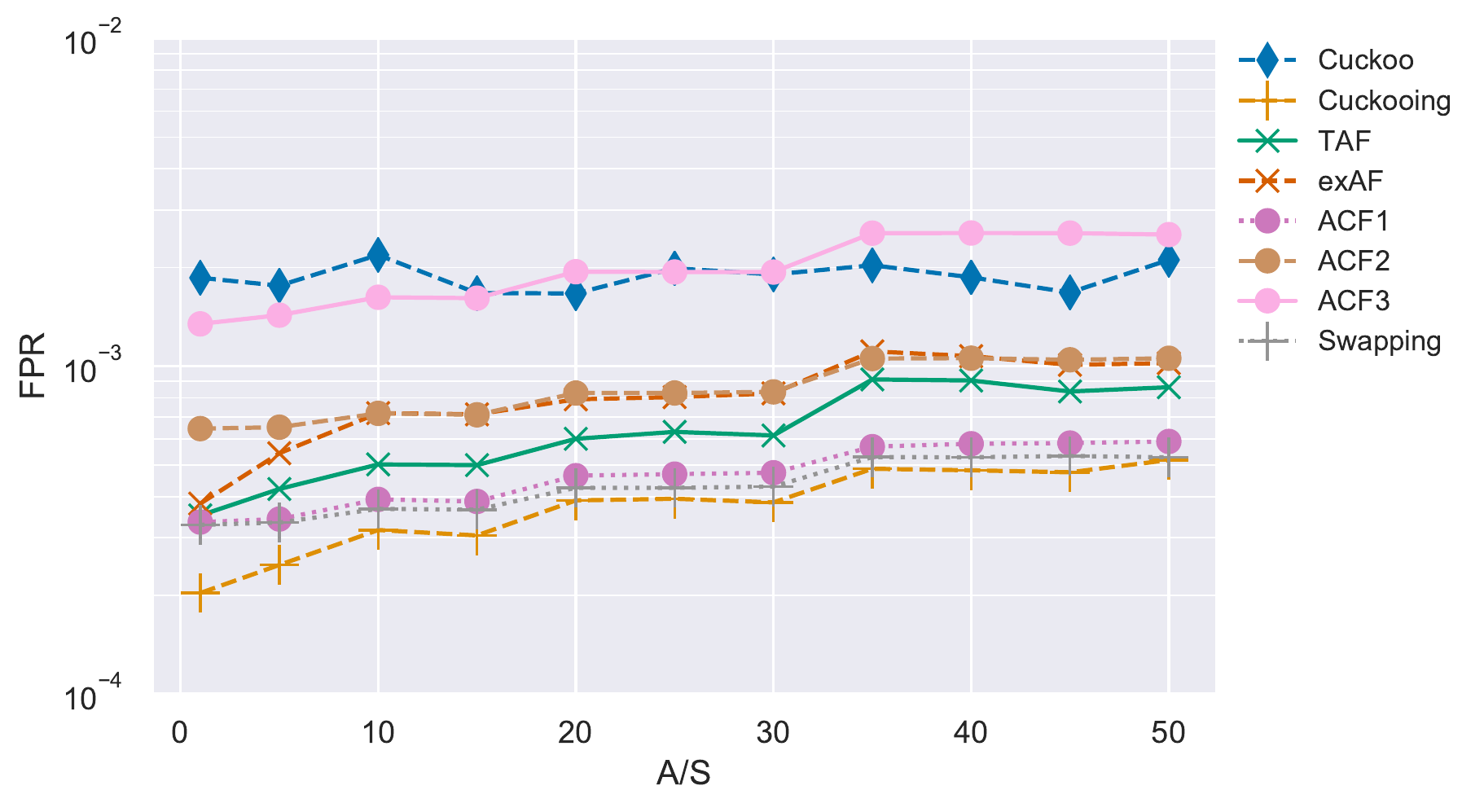}
\includegraphics[scale=.4]{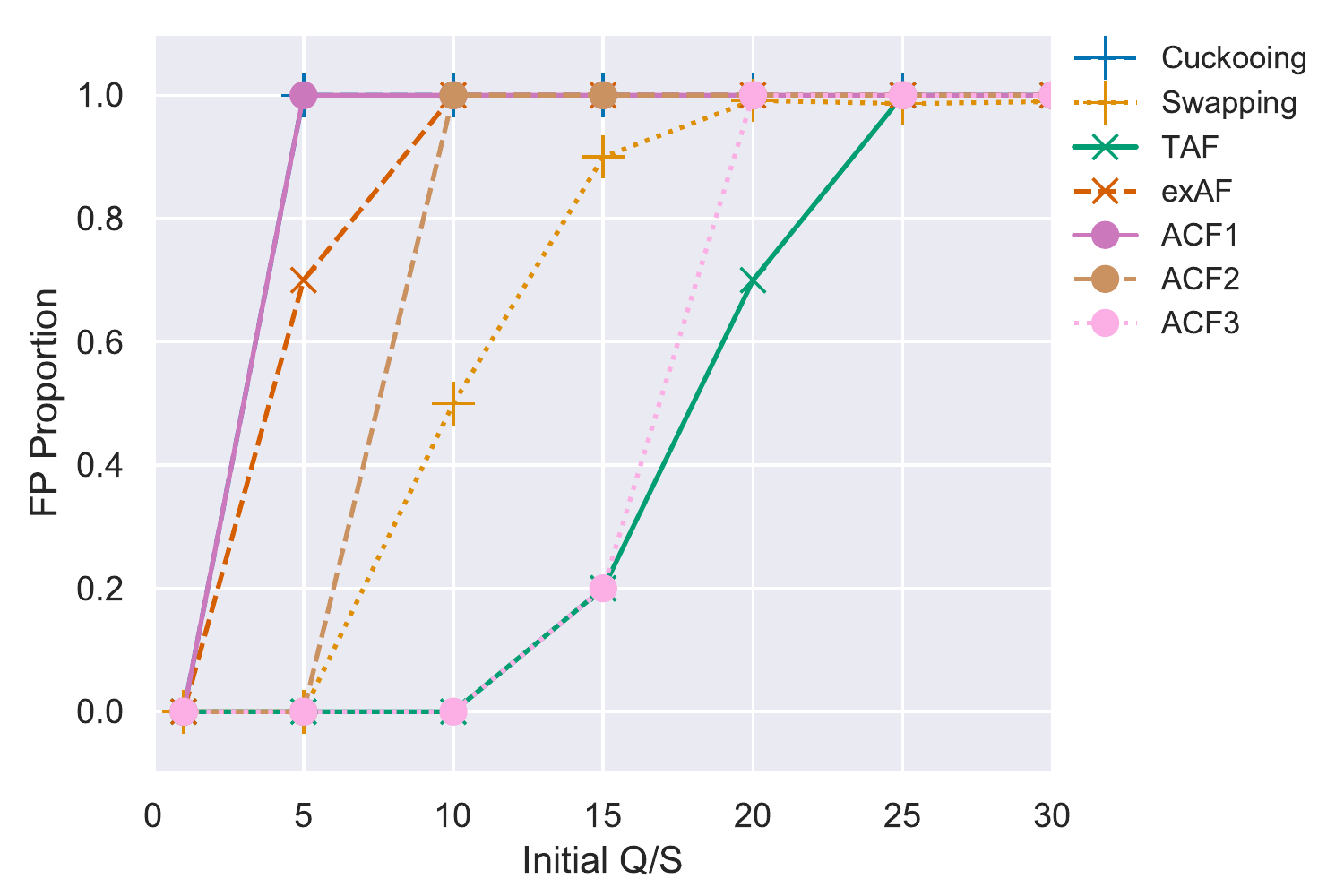}
\caption{On the left is the network trace San Jose dataset.  On the right is adversarial data, where we vary the size of the initial query set, and plot the proportion of elements in the final set that are false positives.}%
\label{fig:extremes}
\end{figure}

\pparagraph{Adversarial tests.}  The main advantage of the \AQF and \eAF is that both are adaptive in theory---even against an adversary.  Adversarial inputs are motivated by security concerns,
such as denial-of-service attacks, but they may also arise in some situations in practice.  For example, it may be that the input stream is performance-dependent, and previous false positives are more likely to be queried again.

We test our filter against an ``adversarial'' stream that probabilistically queries previous false positives.  This input is significantly simpler than the lower bounds given in~\cite{kopelowitz21} and~\cite{bender2018bloom}, but shares some of the basic structure.

Our adversarial stream starts with a set of random queries $|Q|$.  The queries are performed in a sequence of rounds; each divided into 10 subrounds.  In a subround, each element of $Q$ is queried.  After a round, any element that was never a false positive in that round is removed from $Q$.  The filter then continues to the next round.  The test stops when $|Q|/|\calS| = .01$, or a bounded number of rounds is reached.  

The x-axis of our plot is $|Q|/|\calS|$, and the y-axis is the false positive rate during the final round (after the adversary has whittled $Q$ to only contain likely false positives).  We again see that the \AQF does very well up until around $|Q|/|\calS| \approx 20$.  After this point, the adversary is successfully able to force false positives. This agrees closely with the analysis in Section~\ref{sec:analysis}.  

The Cyclic ACF with $s=3$ (ACF3) does surprisingly well on adversarial data even though it is known to not be adaptive.  This may be in part because the constants in the lower bound proof~\cite{kopelowitz21} are very large (the lower bound uses $1/\epsilon^8 \approx 2^{64}$ queries).  However, this adaptivity comes at a significantly worsened baseline FP rate, as this filter struggles on network trace data.  

\subsection{Throughput}
In this section, we compare the throughput of our filters to other similar filters. 

\begin{figure}[ht!]
\centering
\includegraphics[scale=.325]{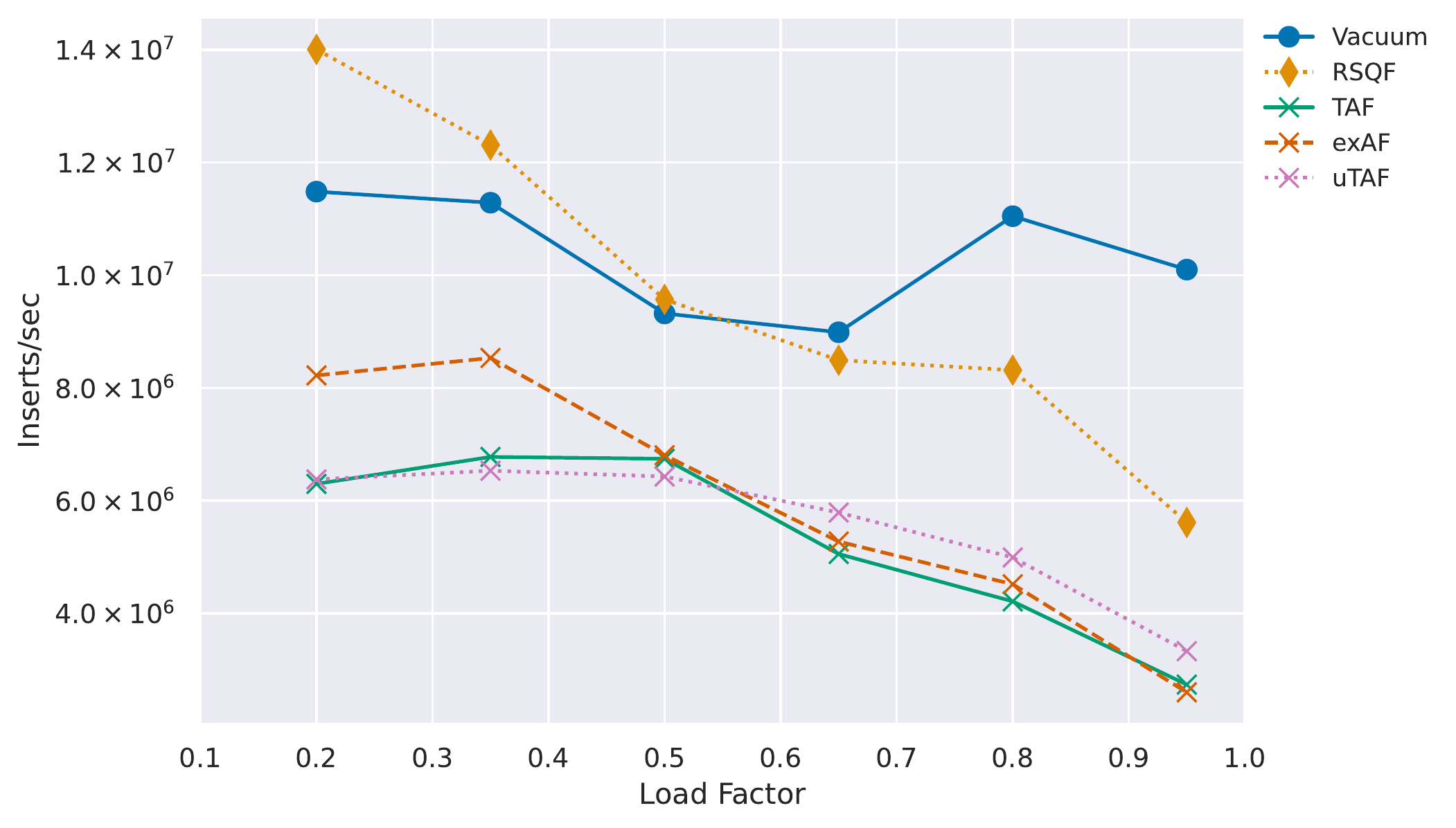}
\includegraphics[scale=.325]{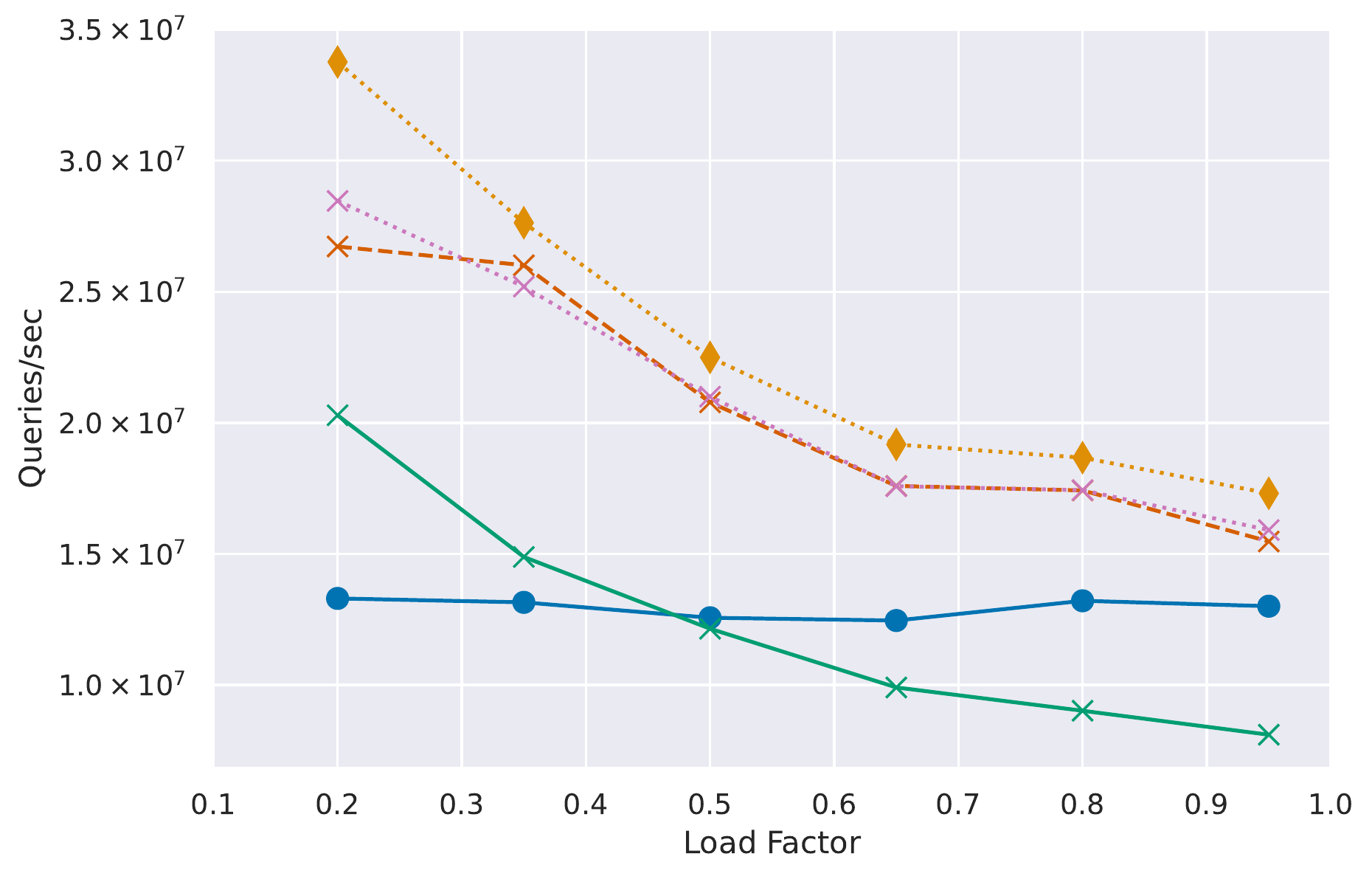}
\caption{The throughput for inserts (left) and queries (right) on the active set Firehose data.}%
\label{fig:thruput}
\end{figure}

For the throughput tests, we introduce several new filters as a point of comparison.  The vacuum filter~\cite{wang2019vacuum} is a cuckoo filter variant designed to be space- and cache-efficient.  We compare to the ``from scratch'' version of their filter~\cite{vacuumGithub}.  We also compare to our implementation of the RSQF~\cite{PandeyBeJo17}.
The RSQF does not adapt, or perform remote accesses. 

Finally, to isolate the cost of the arithmetic coding itself, we compare to our implementation of an \defn{uncompressed \aqf (uTAF)}. The uTAF works exactly as the \TAF, except it stores its hash-selector values explicitly, without using an arithmetic coding.  This means that the uTAF is very space-inefficient.

For the throughput tests, we evaluated the performance on the active set Firehose data used in Figure~\ref{fig:sandia}.  Our filters used $2^{24}$ slots.  We varied the load factor to compare performance.  All data points shown are the average of 10 runs.

The throughput tests show that the \TAF achieves similar performance in inserts to the other filters, though it lags behind in queries at high throughput.  The \eAF performs significantly better for queries, likely due to skipping decodes as discussed in Section~\ref{sec:impl}.

The uTAF is noticeably faster than the \TAF, but is similar in performance to \eAF.
This highlights the trade-offs between the two ways to achieve adaptivity: 
the \eAF scheme of lengthening remainders has better throughput but worse adaptivity per bit; while the \TAF scheme of updating remainders has better
adaptivity per bit but worse throughput.  Overall, while the query-time decodes of \TAF  do come at a throughput cost, they stop short of dominating performance.

\section{Conclusion}\seclabel{conclusion}
We provide a new provably-adaptive filter, the \aqf, that was engineered with space- and cache-efficiency and throughput in mind.  The \TAF is unique among adaptive filters
in that it only uses a fractional number of extra bits for adaptivity ($0.875$ bits per element).
To benchmark the \TAF, we also provide
a practical implementation of the broom filter. To effectively compress the adaptivity metadata for both filters,
we implement arithmetic coding that is optimized for the probability distributions arising in each filter. 

We empirically evaluate the \AQF and \eAF against other state-of-the-art filters that adapt, on a variety of datasets.
Our experiments show that \AQF outperforms the \eAF significantly on false-positive performance, and frequently matches or outperforms 
other heuristically adaptive filters.  Our throughput tests show that our
adaptive filters achieve a comparable throughput to their non-adaptive counterparts.

We believe that our technique to achieve adaptivity through variable-length fingerprints is universal
and can be used alongside other filters that stores fingerprints of elements (e.g., a cuckoo or vacuum filter).
Thus, there is potential for further improvements
by applying our ideas to other filters, taking advantage of many years of filter research.

\bibliographystyle{plainurl}
\bibliography{filters}
\appendix
\section{Additional Background}\seclabel{misc}

\subsection{Filters}

We include additional background on filters and adaptivity.

\pparagraph{Bloom Filters~\cite{Bloom70}.}
The standard Bloom filter, representing a set $\calS \subseteq \calU$,
is composed of $m$ bits, and uses $k$ independent hash functions
$h_1, h_2, \ldots, h_k$, where $h_i: \calU  \rightarrow \{1, \ldots, m\}$.
To insert $x \in \calS$, the bits $h_i (x)$ are set to $1$ for $1 \le i \le k$.
A query for $x$ checks if the bits $h_i (x)$ are set to $1$ for $1 \le i\le k$. If they are,
it returns ``present,'' and otherwise it returns ``absent.'' A Bloom filter does not support deletes. 
For a \fpprob of
$\epsilon > 0$, it uses $m=(\log e) n \log (1/\epsilon)$ bits  
and has an expected lookup time of $O(\log (1/\epsilon))$, where $n = |\calS|$.

\pparagraph{Cuckoo Filters.}  
Like the quotient filter, the cuckoo filter~\cite{FanAnKa14} 
also stores fingerprints $f(x)$ for each $x \in \calS$.  However, the implementation is based on cuckoo hashing~\cite{pagh2004cuckoo} rather than linear probing. 
To query an element $x$, we look for its remainder\footnote{In the context of a quotient filter, a ``fingerprint'' is generally used to refer to both the quotient and remainder of an element, whereas in the context of a cuckoo filter, a ``fingerprint'' is only what's stored explicitly in the slot (what we would call a remainder).  We use the quotient filter vocabulary in this paper.} $r(x)$ in slot locations $h_1(x)$ and $h_1(x) \oplus h_2(r(x))$ in the filter. If
a matching remainder is found, it  returns ``present,'' else returns ``absent.''  To insert $x \in \calS$, its remainder $r(x)$ is stored in $h_1(x)$ or  $h_1(x) \oplus h_2(r(x))$  if
either slot is empty; otherwise, a fingerprint $f(y)$ stored in one of the slots is moved and recursively placed using cuckoo hashing.

Adaptive filters have access to a remote representation, so they can calculate an alternate location $h_2(x)$ to store the fingerprint directly (they do not need to use $h_1(x) \oplus h_2(r(x))$).  Furthermore, they can use two different remainders, one for each slot.

\subsection{Adaptive Cuckoo Filters}
We describe all variants of cuckoo filters that fix false positives.  We use the terminology of Kopelowitz et al.~\cite{kopelowitz21} to distinguish the known filters.  We emphasize that while these filters are generally referred to as being ``adaptive'' because they fix false positives, they are not adaptive under the definition of adaptivity of Bender et al.~\cite{bender2018bloom}. Furthermore, only the Cuckooing ACF is \defn{support optimal}, a non-adversarial notion of adaptivity introduced by Kopelowitz et al.~\cite{kopelowitz21}.

Mitzenmacher et al.~\cite{MitzenmacherPo17} give two methods to heuristically fix false positives.  Both are based on the cuckoo filter.

In the \defn{Cyclic ACF}, each table slot is augmented with $s$ \defn{hash-selector bits}, where $s$ is a small constant.  The Cyclic ACF uses $2^s$ independent hash functions $r_0, r_1, \ldots, r_{2^s-1}$, each of which can calculate the remainder of an element.
The slot containing a given element in the Cyclic ACF is calculated as in a standard cuckoo filter, using two $(\log n)$-bit hash functions $h_0$, $h_1$.  However, the remainder to store in the slot is calculated using $r_i$, where $i$ is the $s$-bit number stored using the hash-selector bits of the corresponding slot.  On a query, stored remainders are compared to $r_i(q)$ (rather than to $r(q)$, as in a standard cuckoo filter).  On an insert for an element $x$, $r_0(x)$ is stored in the appropriate slot (either $h_0(x)$ or $h_1(x)$) and the slot's hash-selector bits are set to $0$.  To fix a false positive caused by some $x\in \calS$, the appropriate hash-selector $i$ is incremented modulo $2^s$, and its associated remainder is replaced with $r_i(x)$.

In the \defn{Swapping ACF}, no extra data is used for adaptivity.  Instead, the Swapping ACF works on a cuckoo filter with bins of size $b > 1$.  
Each location hashed to by $h_0$ and $h_1$ can therefore store $b$ elements.  Experimentally, this leads to good space usage: two hash functions with $b=4$ allow a load factor of .95.
The Swapping ACF uses $b$ remainder hash functions $r_0, \ldots, r_{b-1}$.  When an element $x$ is inserted, an empty slot is found in bin $h_0(x)$ or $h_1(x)$ as in a standard cuckoo filter.  Let $i\in [0,b)$ be the index of the slot within its bin; then $r_i(x)$ is stored in this empty slot.  A query proceeds as in a standard cuckoo filter, comparing $r_i(q)$ to the fingerprint stored in slot $i$ of the bin.  To fix a false positive caused by an element $x\in \calS$ stored in slot $i$, a random slot $j\in [0,b)\setminus\{i\}$ is chosen; $x$ is swapped with the element $y$ stored in slot $j$ (or is moved to $j$ if slot $j$ was empty).  This means that fingerprint $r_j(x)$ is now stored in slot $j$; if $x$ was swapped with an element a $y$ then $r_i(y)$ is now stored in slot $i$.

In the \defn{Cuckooing ACF}, there are four underlying hashes $h_0, h_1,h_2, h_3$.
The first $\log n$ bits of each hash determine a slot in the hash table (a quotient $q_i$); the next $\log(1/\epsilon)$ bits determine a remainder $r_i$.
An element $x$ stored in the slot $q_i(x)$ has the corresponding remainder $r_i(x)$.
If a false positive is caused by some element $x\in \calS$, then $x$ is ``cuckooed'' to another location: if $r_i(x)$ is currently stored in $q_i(x)$, then it is removed and $r_{i+1}(x)$ is placed in slot $q_{i+1}(x)$.
Any element previously stored in slot $q_{i+1}(x)$ is moved recursively.

\subsection{Additional Related Work}
A precursor to adaptivity, 
Chazelle et al.'s \defn{Bloomier filters}~\cite{ChazelleKiRu04} generalize Bloom filters to avoid a predetermined list of
undesirable false positives.  Given a set $S$ of size $n$ and a whitelist $W$
of size $w$, a Bloomier filter stores a function $f$ that returns {``present''}
if the query is in the $S$, {``absent''} if the query is not in $S \cup W$, and
``is a false positive'' if the query is in $W$. Bloomier filters use $O( (n+w)
\log{1/\epsilon})$ bits.  The set $S \cup W$ cannot be updated without
a significant space penalty, and thus their data structure is limited to a static
whitelist.  

Naor and Yogev~\cite{NaorYo15} study Bloom filters in the context of a
{repeat-free adaptive adversary} that queries elements until it finds a never-before-queried element that has a false-positive probability greater than $\epsilon$.
They show how to protect a filter from repeat-free adaptive adversaries using cryptographically-secure hash functions so that new queries are indistinguishable from uniformly selected queries~\cite{NaorYo15}. 

\section{A More Detailed Description of the \AQF}\seclabel{detailed-taf}

\pparagraph{Notation and structure.}
The \aqf is a single-hash function filter~\cite{PaghPaRa05, PandeyBeJo17, BenderFaJo12, FanAnKa14} that
stores a fingerprint for each element in the set $S$.  These fingerprints are updated by the filter to maintain a sustained false-positive rate of $\epsilon$.  

The \aqf uses a hash function $h: U \rightarrow \{0, \ldots, n^c\}$ for some $c \geq 4$. 
Note that storing the entire hash function would require too much space ($c \log n$ bits).  
Instead, the filter stores ``pieces'' of the hash function, i.e., \defn{fingerprints} of size $\log(1/\epsilon)$ for each element in the set.  
Unlike the adaptive broom filter of Bender et al.~\cite{bender2018bloom}, these fingerprints may not be prefixes of the hash function.

For a given $x \in S$, the first $\log n$ bits of $h(x)$ are called the quotient, $q(x)$. 
The fingerprint $f(x)$ consists of the quotient followed by $\log(1/\epsilon)$ bits, called the remainder, $r(x)$.
The remainder in the \aqf is determined by its \defn{hash-selector value}. 
For each $x \in S$, the \aqf stores a hash-selector value $i$, where
$i \in \{0, 1, \ldots, \lfloor \frac{(c-1) \log n}{\log (1/\epsilon)} \rfloor\}$.  (This upper bound is sufficient with high probability.) The remainder $r(x)$
is defined as the consecutive sequence of $\log(1/\epsilon)$ bits starting at the $(\log n + i \log 1/\epsilon)$-th bit of $h(x)$.

\newcommand{\lo}{_{\text{L}}}
\newcommand{\hi}{_{\text{H}}}

\subsection{Filter Operations}

We describe how the \aqf modifies the insert and lookup operations of the quotient filter 
to support adaptivity using hash-selector values.

Filters that have a remote representation can support deletes, as they have the
ability to check if the element is already present in the set before deleting it.
Our filter as-is does not support deletes; that said, it can likely be extended to support deletes.  However, as discussed in~\cite{bender2018bloom}, achieving adaptivity with deletes presents unique challenges.

Before we describe our insert and look up operations, we describe our base data structure, the rank-and-select quotient filter (RSQF), and introduce relevant terminology~\cite{PandeyBeJo17}.  
A quotient filter maintains the invariant that all remainders of elements with the same quotient are stored contiguously.
The main difference between a quotient filter and an RSQF is that a quotient fitler uses $3$ metadata bits per slot, while the RSQF requires only two---\defn{occupied} and \defn{runend} bits. 
The occupied bit associated with a slot $i$ indicates whether any elements with the quotient $i$ have been inserted into the filter.
The runend bit associated with slot $i$ tracks whether the remainder placed in slot $i$ is the last remainder in a contiguous run of remainders with the same quotient.

\pparagraph{Insertions.}  
To insert $x$, the RSQF uses \emph{rank} and \emph{select} operations\footnote{The \emph{rank} and \emph{select} operations are discussed in detail in~\cite{PandeyBeJo17}.  In short, \emph{rank} allows us to count how many set occupied bits there are in a given range.  If there are $i$ such bits, then finding the $i$-th runend bit will give the end of the run corresponding to the last set occupied bit in the range---this can be implemented with a \emph{select} operation.} to find the end of the run corresponding to the quotient $q(x)$.  
If the slot is empty, the RSQF insertion algorithm inserts $r(x)$ in that slot; if not, the algorithm shifts remainders forward to make room for the new element.  Next, the algorithm inserts $r(x)$ and updates the relevant metadata bits.

Insertions work analogously in our filter, with the additional tasks of updating the block's hash-selector values as needed and updating the remote representation $\remote$.  
In particular, if $x$ is inserted into an empty slot at the end of the run corresponding to $q(x)$, we do not need to modify hash-selector values.  
(Initially, the hash-selector value of each element is zero, and by default, all empty
slots have a zero hash-selector associated to it.) If, on the other hand, inserting
$x$ requires shifting remainders, then we need to update the hash-selector values
of all blocks that are touched by the insertion.  In particular, if $B_s$
is the block $x$ is inserted in, and $B_t$ the block associated with the next
empty slot, then we update the arithmetic code of all blocks between $B_s$ and $B_t$.  
This requires decoding the arithmetic code for each block, and then re-encoding it with the new (now shifted) hash-selector values.  

\pparagraph{Lookups.} Given a query $x$, the algorithm computes its quotient $q(x)$ and
uses rank and select operations to find the end of the run corresponding to the quotient $q(x)$.  
The algorithm then decodes the arithmetic code associated with all blocks that contain a remainder with the same quotient.  
Moving left, for each remainder $r_i(y)$ stored in the run associated with $q(x)$, the algorithm checks if $r_i(y) = r_i(x)$ where the remainder $r_i(x)$ is computed using the hash-selector value $i$ of element $y$, as shown in~\figref{fingerprint}.  
Thus, a different remainder is computed for each slot using the slot's hash-selector value.  If a matching remainder is found, the algorithm returns ``present.'' If the start of the run is reached without finding a match, the algorithm returns ``not present.''

\pparagraph{Adapts.}
When a lookup operation on an element $x$ returns ``present,'' the \aqf accesses the remote representation $\remote$
to check if $x \in S$.  If $x \notin S$, then the \aqf adapts to ``fix'' the false positive.  The adapt
algorithm first finds the set $C$ of all elements $y \in S$ such that the fingerprint $f(x)=f(y)$ using $\remote$. 
The remainder function $r$ used to compute the remainder bits for both is
determined by the hash-selector values for each $x$ that shares a quotient with $y$. 
The adapt function then increments the hash-selector indices for each such $y \in C$. This requires updating the arithmetic code of the block(s) associated with the slot in the filter that stores $r(y)$.  

Our data structure implementation allocates a fixed amount of space
for storing the arithmetic code associated with each block (exactly 56 bits).
Incrementing hash-selector values can cause an overflow.  When the adapt
operation fails because the encoding exceeds 56 bits, we issue a \defn{rebuild} operation that reclaims space.

\pparagraph{Rebuilds.}
When an adapt operation that increments the hash-selector values of a block $B$ fails because the encoding exceeds its allocated space, the \aqf rebuilds by setting all hash-selector bits in $B$ to 0 and rehashing all of the remainders in $B$ with $r_0$.
This resets the remainders in $B$ to their initial state and thus loses adaptivity gained from prior queries.

Note that any adaptive filter that uses a fixed amount of space must rebuild or rehash after a certain number
of queries.  The broom filter rehashes its elements after $\Theta(n)$ queries (deamortized by rehashing
elements that cross a frontier~\cite{bender2018bloom}).  Rebuilding the filter periodically is expensive; instead, we choose to reset hash-selector values.  In Section~\ref{sec:space}, we analyze the number of unique queries the filter can handle before it needs to rebuild (given the space it uses to store hash-selector values).

\pparagraph{MurmurHash.}
Our implementation of both filters uses MurmurHash~\cite{murmurhash}, which has a 128-bit output.  
This suffices for adaptivity bits in the \eAF.
In the \AQF, we partition the output of MurmurHash into the quotient, followed by chunks of size $\log(1/\epsilon)$, where each chunk corresponds to one remainder.  
Each time we increment the hash-selector value, we simply slide over $\log (1/\epsilon)$ bits to obtain the new remainder.

\section{Arithmetic Coding}\label{sec:arithmetic-coding}

To store the hash selectors effectively, we need a code that satisfies the following requirements:
\begin{enumerate}
\item\label{enum:entropy} the space of the code should be very close to optimal,
\item\label{enum:smallspace} the code should be able to use $<1$ average bits per character encoded, and
\item\label{enum:fast} the encode and decode operations should be fast enough to be usable in practice.
\end{enumerate}

Huffman coding~\cite{moffat2019huffman} is fast and close to entropy optimal, but cannot encode characters with $<1$ bit per space.  Arithmetic coding~\cite{cacm-arcd} is optimal even with fractional bits, but is known to be very slow, and can be difficult to implement correctly.  

We resolve this with a more careful analysis.  In \corref{hashprobcor}, we show that the values we need to encode satisfy are close to a geometric distribution.  These distributions allow for more effective coding---for example, a Golomb code~\cite{golomb1966run} is both fast and space-optimal for such distributions, but uses $> 1$ bit per element.

We use arithmetic coding instead of other compression techniques (such as Huffman coding) because of the probability distribution of hash-selector values: the probability that a hash-selector is zero is the largest, with each subsequent value's probability decaying geometrically (\secref{analysis}).  For such a distribution, arithmetic coding is known to give considerably better compression than other methods~\cite{brown-arcd}. 

In this section, we give the details of how we implement this coding.

\subsection{Overview}
In an arithmetic code, a message is represented as an interval and stored by picking a number in the interval. A message's interval is long or short in proportion to how probable the message's contents are: a highly probable message will be coded as a long interval (which requires less precision and thus fewer bits), and an improbable message will be coded as a short interval (which requires greater precision and thus more bits).  

More formally, arithmetic coding represents a message comprised of the letters $l_1, l_2, ..., l_n$ 
as a subinterval of the unit interval, $[0, 1]$.   Let $P(k)$ be the probability of letter $l_k$.
We associate to $l_k$ the interval
\[
	\text{range}(l_k) = \left[ \sum_{i}^{k-1}P(i), \sum_{i}^{k}P(i) \right).
\]

So for any letter $l_k$, its range has size $P(k)$.
To encode a string, we start with the interval $[0,1]$ and subdivide it by the range of the first letter.
We then subdivide the result by the range of the next letter, continuing until we reach the end of the string.
We then pick a value in the middle of the range---this is our encoding.  

To decode an encoding $C$, we check which subinterval of $[0,1]$ $C$ is in to derive the first letter. Then we see which subsubinterval $C$ is in, and so on, to find subsequent letters.
Arithmetic coding generally requires a stop character to denote that the code is over.  This is not required for our application, as we always encode or decode exactly $64$ characters (the size of each block).

We use arithmetic coding to compress hash-selector bits in the \aqf and to compress adaptivity bits in the \eAF extension adaptive filter.

\pparagraph{A key observation.}  The distribution given in~\corref{hashprobcor} is a geometric distribution.  In fact, plugging in reasonable values for $\epsilon$ will show that the performance is dominated by the entropy of the character for hash value $0$ and $1$.  All three of our implementations perform well in part because they perform particularly well on these two values.

\subsection{Optimized Arithmetic Encoding for \AQF}
\label{sec:aqfarithmeticencoding}
In this section, we describe an optimized encoding that works specifically for the distribution of values when $\epsilon = 1/256$, with $.875$ average bits of space per element.  We describe a more general method below.  It seems likely that a similar optimized method could be created for other parameter settings.

In short, the goal of our optimization is to approximate each probability in~\corref{hashprobcor} as the sum of a small number of inverse powers of two, each of which can be calculated with right shifts.

To encode, we iterate over all $64$ numbers in our array.  Let \texttt{high} and \texttt{low} be the high and low points of the current range, and let \texttt{range = high - low}.  We update \texttt{low} using the following switch statement.  Each case corresponds to the probability of the previous character.  For example, the probability of a $0$ is very close to $.7808$, which is approximated by $1/2 + 1/4 + 1/32 = 0.78125$.

\begin{verbatim}
switch (letter) {
    default:
        return 0;
    case 6:
        low += (range >> 19) + (range >> 20) + (range >> 23);
    case 5:
        low += (range >> 14) + (range >> 16);
    case 4:
        low += (range >> 10) + (range >> 11);
    case 3:
        low += (range >> 6) + (range >> 8);
    case 2:
        low += (range >> 3) + (range >> 4) + (range >> 7) + (range >> 9);
    case 1:
        low += (range >> 1) + (range >> 2) + (range >> 5);
    case 0: ;
}
\end{verbatim}

A nearly-identical switch statement works to update \texttt{high}---each case is decremented by $1$, and there is a break separating the cases.  After each character is coded, we fail if \texttt{high - low $<$ 2}.

Decoding works similarly.  We iterate 64 times, each time decoding one character.  Decoding is a bit less elegant: we guess the value of the resulting number one at a time (testing if it is $0$, then $1$, then $2$), again using bit shifts.  Since most hash-selector values are low (in fact most are 0), these guesses have a low average cost.

\subsection{General Arithmetic Coding for \AQF}
We describe an arithmetic code that works for any \AQF, regardless of $\epsilon$ or the desired number of adaptivity bits.  This code is less tuned than the previous code.  Instead, we approximate the equation given in~\corref{hashprobcor} using two variables.  The user can run tests to optimize these variables (these tests are provided in the \AQF code), or can set them using reasonable default values.

In short, we assume that for some integer $x$, a $0$ occurs with probability $1 - 1/2^x$.  For any $i > 0$, we assume that for some integer $y$, $i$ occurs with probability $(1/2^x)(1 - 1/2^y)2^{y(i-1)}$.

Calculating the correct probabilities for a given integer range entails repeatedly calculating $\lfloor m \cdot \frac{2^n-1}{2^n} \rfloor$ where $n,m \in \mathbb{N}$.
We can compute these kinds of multiplications with the following expression: \verb| (m>>n) * ((1<<n)-1) + (m & ((1<<n)-1)) - (m & ((1<<n)-1) != 0)|.

The original assumptions on the character probabilities may not seem particularly reasonable, but surprisingly this coding is reasonably effective.  It seems likely that this is because, for the parameters for our use case, the probabilities of $0$ and $1$ are by far the most important to approximate.  We have two degrees of freedom with $x$ and $y$, allowing us to achieve a fairly close approximation.

\subsection{Arithmetic Code for \eAF} A fingerprint extension in a broom filter is simply a string of bits, but while encoding 
we need to distinguish a length-one extension (i.e. ``0'') from a longer one (``00'').  
To encode fingerprint extensions using arithmetic coding, we treat each extension as a letter
 in our alphabet---that is, we use an infinite-size alphabet of all possible extensions `', $0, 1, 00, 01, 10, \ldots$.

Recall that in an arithmetic code, we need to calculate the probability of each letter. Since any extension of length $\ell$ occurs with probability $2^{-\ell}$, which can be computed quickly using
bit shifts, we optimize by first ordering all possible extensions by length.  
In particular, to encode an extension $x$ of length $\ell$ using a subinterval in $[L, H)$, we first divide $[L, H)$ into subintervals by length, where the subinterval for length $k$ has length proportional to $\Pr[\ell = k]$.  Next, because all extensions of a given length are equally likely, we split the subinterval for extensions of length $\ell$ into $2^{\ell}$ evenly-sized pieces and take the piece corresponding to $x$'s position in the sequence.

We implement this two-step interval subdivision procedure as described below.  The $\alpha$ term accounts for the subintervals of $[L, H)$ corresponding to extension length.  The $\beta$ term accounts for the second-level subinterval that yields a particular extension (below, $r$ is the size of the range):
\[
  L \gets L + r \Pr(\ell=0)
            + \underbrace{r \Pr(\ell>0) \sum_{i=1}^{\ell-1} 2^{-i}}_{\alpha}
            + \underbrace{r \Pr(\ell>0)\cdot 2^{-2\ell} \cdot x \vphantom{\sum_{a}^{b}}}_{\beta}
\]

Our implementation uses the method in Appendix~\ref{sec:aqfarithmeticencoding} to encode the probability of each $\ell$.  Partitioning into pieces based on $\ell$ can be done using bit shifts.

\end{document}